\documentclass[journal]{new-aiaa}
\usepackage[utf8]{inputenc}
\usepackage{textcomp}

\usepackage{graphicx}
\usepackage{float,subcaption}

\usepackage{amsmath,amssymb,latexsym,amsthm,amsfonts,mathtools}

\usepackage{siunitx}
\usepackage{booktabs}

\theoremstyle{plain}
\newtheorem{theorem}{Theorem}
\newtheorem{lemma}{Lemma}

\theoremstyle{remark}
\newtheorem{remark}{Remark}
\usepackage{hyperref}
\usepackage[nameinlink]{cleveref}

\usepackage{graphicx}
\usepackage{amsmath}
\usepackage[version=4]{mhchem}
\usepackage{siunitx}
\usepackage{longtable,tabularx}
\newtheorem{definition}{Definition}
\newtheorem{assumption}{Assumption}

\newcommand{\tr}[1]{\textcolor{red}{#1}}

\title{Engagement-Zone-Aware Input-Constrained Guidance for Safe Target Interception in Contested Environments}

\author{Praveen Kumar Ranjan \footnote{Postdoctoral Fellow,  email: praveen.ranjan@my.utsa.edu, Member AIAA}.}
\affil{Unmanned Systems Lab, Department of Electrical Engineering,\\ The University of Texas at San Antonio, San Antonio, TX, 78249}
\author{Abhinav Sinha \footnote{Assistant Professor, email: abhinav.sinha@uc.edu $^\ddag$ Corresponding author, Senior Member AIAA.} }
\affil{Guidance, Autonomy, Learning, and Control for Intelligent Systems (GALACxIS) Lab, \\ Department of Aerospace Engineering and Engineering Mechanics, University of Cincinnati, OH 45221, USA}
\author{Yongcan Cao \footnote{Professor, email: yongcan.cao@utsa.edu, Senior Member AIAA.}}
\affil{Unmanned Systems Lab, Department of Electrical Engineering,\\ The University of Texas at San Antonio, San Antonio, TX, 78249}

\begin{document}

\maketitle

\begin{abstract}
This paper addresses target interception in contested, GPS-denied environments in the presence of multiple moving defenders whose interception capability is limited by finite engagement ranges. Conventional methods typically impose conservative stand-off constraints based solely on maximum engagement distance and neglect the actuator limitations of the interceptor. Instead, we formulate safety constraints using defender-induced engagement zones (EZs), defined as regions of the attacker’s state space from which capture becomes inevitable under the defenders’ current engagement capability. To explicitly account for actuator limits, the nonholonomic vehicle model is augmented with symmetric input saturation dynamics. A time-varying safe-set tightening parameter is introduced to compensate for transient constraint violations induced by actuator dynamics. To ensure scalable safety enforcement in multi-defender scenarios, a smooth aggregate safety function is constructed using a log-sum-exp (soft-minimum) operator that combines individual threat measures associated with each defender’s engagement capability. A smooth switching guidance strategy is then developed to coordinate interception and safety objectives. The attacker pursues the target when sufficiently distant from threat boundaries and progressively activates evasive motion as the EZ boundaries are approached. The resulting controller relies only on relative measurements and does not require knowledge of defender control inputs, thus facilitating a fully distributed and scalable implementation. Lyapunov-based analysis provides sufficient conditions guaranteeing target interception, practical safety with respect to all defender engagement zones, and satisfaction of actuator bounds. An input-constrained guidance law based on conservative stand-off distance is also developed to quantify the conservatism of maximum-range-based safety formulations. Simulation studies with stationary and maneuvering defenders demonstrate that the proposed EZ-aware formulation yields shorter interception paths and reduced interception time compared with conventional maximum-range-based safety methods while maintaining safety throughout the engagement.
\end{abstract}

\section{Introduction}
The growing need for autonomous systems operating in contested environments necessitates guidance strategies that ensure both survivability and mission accomplishment \cite{doi:10.2514/1.37030,doi:10.2514/1.G007057,doi:10.2514/1.G003157,10634571,10660561,10839025}. Applications such as target interception, air defense, and precision strike missions place requirements on guidance strategies in addition to ensuring negligible miss distance. Such requirements have been addressed in the form of terminal constraints, e.g., impact time (see \cite{9000526,doi:10.2514/1.G005367,doi:10.2514/1.G005180} and references therein). However, in many settings, the pursuer must often breach defended regions to engage with the target and function under strict sensing and maneuvering limitations. A fundamental abstraction for analyzing these interactions is the Target–Attacker–Defender (TAD) engagement, where a pursuer (or attacker) attempts to reach a target while one or more defenders employ interception strategies to neutralize the attacker \cite{sinha2022three,9274339}.

While defender–target cooperation \cite{7171913,9274339,Casbeer2018,sinha2022three,doi:10.2514/6.2010-7876,doi:10.2514/1.58566} has been the primary focus in most prior studies of TAD games, only limited attention has been given to strategies from the attacker’s perspective. In \cite{doi:10.2514/6.2025-1902}, the authors presented a pursuit strategy under an adversarial environment by leveraging reinforcement learning and data-driven methodologies. The three-agent pursuit-evasion dynamics is further investigated in \cite{doi:10.2514/1.61832}, where differential game formulations are employed to derive sufficient conditions for the attacker to strike the target while avoiding interception by the defenders. The authors in \cite{SUN20192337} designed guidance laws to steer the attacker to a defender-safe boundary and then maintain a critical miss distance, establishing attacker-win conditions under linearized game dynamics without requiring knowledge of the target/defender's control efforts. In \cite{doi:10.2514/1.51611}, a linear quadratic differential game was used to derive cooperative pursuit–evasion strategies, including the homing interceptor’s optimal pursuit and evasion policy. The authors in \cite{QI20171958} utilized a three-player bounded-control game to design an attacker strategy that guarantees a miss distance from the defender by avoiding the infeasible zero effort miss region and establishes sufficient conditions for the attacker's success. 

It is worth mentioning that most of the safety constraints in interception and pursuit-evasion problems are typically enforced via feasibility constraints, e.g., minimum distance from obstacle or threat region, and ad-hoc actuator limits during implementation. As a result, such considerations do not explicitly characterize the effective input bounds for safety preservation. For example, the authors in \cite{doi:10.2514/1.G003223} developed an intercept-angle target interception law for a multiple-obstacle environment by enforcing safety constraints through minimum-distance constraints within an optimal guidance formulation. In \cite{8263864}, a multiple TAD differential game was studied, where target safety is ensured through optimal defender-attacker pairing strategies and feedback control laws derived from double-integrator dynamics. Other works have addressed safety via input constraints. In \cite{RUSNAK20119349}, the control effort of the players is penalized in a TAD game through quadratic costs, resulting in the bounds appearing as part of the optimization objective rather than as explicit safety constraints. In \cite{11288066}, guidance law was developed for impact-time interception that explicitly accounts for seeker field of view constraints and actuator bounds by incorporating an input affine saturation model. In \cite{doi:10.2514/1.47276}, the authors developed a multi-model adaptive estimator-guidance framework utilizing model-dependent gains to account for actuator saturation and target maneuver uncertainty. The authors in \cite{ranjan2025incp} developed an input-constrained guidance law for an attacker in a multiple static defender threat environment using a discontinuous switching function. A time-constrained target-intercept guidance strategy was developed in \cite{doi:10.2514/6.2026-0121} to lead an attacker toward a target while avoiding multiple obstacles of varying sizes.

Most of the above-mentioned works require complete knowledge of the defender's strategy to guarantee the attackers' escape and employ simplified vehicle models (e.g., linearized dynamics). This assumption may be impractical in realistic scenarios, especially those involving multiple defenders with range and maneuverability constraints. A more practical approach is to design the intercept guidance strategy using geometric threat sets rather than exact defender threat predictions. An engagement zone (EZ) defines the set of attacker-relative states from which a defender can guarantee interception, assuming the attacker does not change course. Several studies have explored modeling engagement or capture zones under different structural constraints (e.g., static obstacles \cite{OYLER20161}, constrained environments \cite{ZHOU201664}, and visibility \cite{IBRAGIMOV1998187}). An alternate relevant formulation involves modeling the defenders as range-limited, where each defender can travel only up to a maximum range, as considered in \cite{10365808, doi:10.2514/1.I011394,doi:10.2514/1.I011593}. These modeling approaches offer ways to encode threat geometry, which helps develop safety-aware guidance strategies in constrained environments. 

Unlike static safety margins or fixed threat envelopes, EZs are dynamic, geometry-dependent, and can be analytically characterized. This formulation incorporates key factors such as velocity ratios, turning constraints, and capture radii. Therefore, incorporating EZ information into guidance design enables more accurate identification of the safe maneuvering region to reduce conservatism and improve interception efficiency in a contested environment. Moreover, the analytical characterization of the EZs allows for the safety to be evaluated directly from engagement geometry, circumventing the need for repeated optimization steps typically required in constrained optimal control or model predictive control approaches. Additionally, practical interceptor systems operate under input constraints, and neglecting these constraints may result in infeasible guidance commands that compromise safety during the engagement. However, most existing approaches rely on computationally expensive, optimization-based methods that use global information and do not explicitly guarantee safety or input constraint satisfaction. Despite substantial progress in characterizing EZs and capture regions, most methods rely on online optimization that treats EZs and input bounds as constraints, leading to significant computational overhead and limiting their applicability in fast, decentralized, and resource-constrained environments, especially when multiple defenders are present. Motivated by these limitations, we develop an engagement-zone-aware guidance framework for the attacker that explicitly incorporates input constraints while ensuring safety with respect to defender-induced threats. To the best of our knowledge, this is the first work that systematically integrates analytic EZ characterization with input-constrained feedback design for safe target interception in a contested environment. The main contributions of this work are summarized below.

 First, we develop a nonlinear switching guidance strategy using only relative measurements that enables the attacker to intercept a target in the presence of multiple moving defender threats while ensuring safety and satisfying input constraints. Unlike optimization-based or differential game approaches, the proposed strategy relies on analytical safety characterization and provides the guidance law in a feedback form to enable computationally efficient real-time implementation.

{Second, the safety guarantees in our proposed design are twofold-- (i)  safety with respect to defender-induced threats, and (ii) safety ensuring permissible bounded control input while maintaining safety or stability. Our work aggregates individual defender EZ-based threats via a smooth log-sum-exp approximation to account for multiple defenders. Therefore, our method reduces the conservatism associated with prior methods that rely on fixed stand-off distances to enable efficient interception in contested environments.}

{Third, the proposed design provides safety guarantees by augmenting the engagement kinematics with a smooth symmetric input saturation model and proposing a safe-set tightening parameter that shrinks the safe set to enable the vehicle to preemptively apply deceleration when heading towards EZs. Unlike prior methods that treat actuator limits only as feasibility constraints or ignore their impact on safety, the proposed approach explicitly integrates input constraints into the safety-critical guidance design. Moreover, we provide analytical guarantees on safety and interception performance under bounded control inputs.}

{Fourth, we also develop a guidance strategy based on conservative stand-off distance constraints and contrast it with the engagement zone formulation for further insights. In both scenarios, we establish theoretical guarantees showing that the proposed strategies preserve safety with respect to all defender engagement zones while satisfying input constraints.}

\section{Preliminaries and Problem Formulation}
This section formulates the attacker-defender-target engagement kinematics in relative coordinates, incorporating bounded-rate dynamics for the control input. We then characterize the defender-induced EZs and formulate the control objectives for the attacker to intercept the target while maintaining safety from the defenders.

\subsection{Vehicles' Relative Kinematics Model}
Consider a multi-agent engagement scenario consisting of a mobile attacker $A$, $n$ moving defenders denoted by the set $\mathcal{D} = \{D_1, D_2, \ldots, D_n\}$, and a stationary target $T$, as illustrated in \Cref{fig:enggeom}. Each defender is equipped with an interception capability that can only reach a finite distance from its instantaneous position. This maximum engagement distance is represented by the dotted black circles centered at each defender in \Cref{fig:enggeom}.
\begin{figure}[h!]
    \centering
    \includegraphics[width=.5\linewidth]{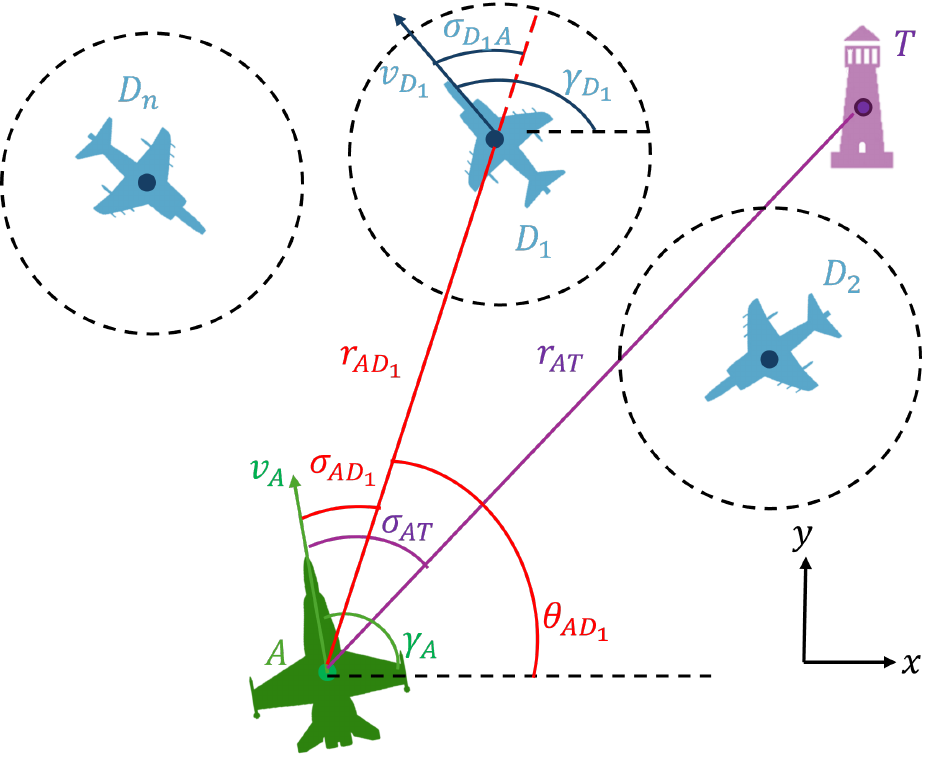}
    \caption{Attacker-Target-Defender engagement geometry.}
    \label{fig:enggeom}
\end{figure}

The attacker and the defenders are modeled as nonholonomic vehicles moving at constant speed. Their kinematics in the inertial frame are described by
\begin{align}
\dot{x}_i = v_i \cos \gamma_i, \quad \dot{y}_i = v_i \sin \gamma_i, \quad \dot{\gamma}_i = \frac{a_i}{v_i},
\label{eqn:kine_inertial}
\end{align}
where the subscript $i$ denotes the attacker ($i=A$) or the $j$\textsuperscript{th} defender ($i=D_j$) with $j \in\mathcal{D}$, $[x_i, y_i]^\top \in \mathbb{R}^2$ denotes the vehicle position, $v_i > 0$  denotes the vehicle speed, $\gamma_i \in(0, 2\pi]$ denotes the vehicle heading angle and $a_i$ denotes the lateral acceleration that is the only control input to the vehicle model. This vehicle model captures the motion of turn-constrained platforms such as fixed-wing aircraft, missiles, and underwater vehicles that maneuver through lateral forces (e.g., lift or side force). To explicitly account for bounded control inputs, we augment the attacker dynamics with a smooth symmetric input-saturation model \cite{kumar2025provably},
\begin{align}
    \dot{a}_A = \left[1-\left(\frac{a_A}{a_{\max}}\right)^n\right]a_A^c - p_1 a_A, \label{eqn:sat_model}
\end{align}
where $n=2$, $p_1\in\mathbb{R}_{> 0}$ denote constants, and $a_A^c$ denotes the commanded lateral acceleration, which is the pseudo control input for the augmented system incorporating the above saturation model. 
\begin{remark}\label{rmk:inputsat}
    From \eqref{eqn:sat_model}, it follows that when $|a_A|\to a_{\max}$, $\dot{a}_A\to -p_1a_A,$ and when $|a_A|\to -a_{\max}$, $\dot{a}_A\to p_1a_A,$ which implies that $a_A$ will decrease if $a_A\to a_{\max}$ and $a_A$ will increase if $a_A\to -a_{\max}$. Therefore, the above saturation model ensures that $a_A$ remains bounded as $|a_A|<a_{\max}$. Thus, we augment the saturation model in \eqref{eqn:sat_model} with the kinematics \eqref{eqn:rel_dyn_1}-\eqref{eqn:rel_dyn_2} and will design $a_A^c$, which will automatically ensure that $a_A$ remains within the permissible limits. 
\end{remark}
To facilitate guidance design with only relative information, we transform the given motion model into relative polar coordinates. This transformation helps us to reduce the dimensionality of the problem and the complexity of the control, while offering intuitive geometric information on the attacker's motion relative to the target and the defenders. Based on the inertial positions, the relative distance $r_{Aj} \in \mathbb{R}_{\geq0}$ and the line-of-sight (LOS) angle $\theta_{Aj} \in[0,2\pi)$ between the attacker and the $j$\textsuperscript{th} agent (defenders or target) are defined as,
 \begin{align}
     r_{Aj} = \sqrt{\left(x_{j}-x_{A}\right)^2+\left(y_{j}-Y_{A}\right)^2}, ~\theta_{Aj} = \tan^{-1}\left(\frac{y_{j}-y_{A}}{x_{j}-x_{A}}\right), \label{eqn:rel_dis}
 \end{align}
 where the subscript $j \in \mathcal{D} \cup \{T\}$ and the LOS is measured from the attacker to the $j$\textsuperscript{th} vehicle. Additionally, we define the attacker's lead angle as the angle subtended by the attacker's velocity to the respective LOS to defenders or target, and is given by,
 \begin{align}
     \sigma_{Aj} = \gamma_A-\theta_{Aj}, ~\forall \; j\in \mathcal{D} \cup \{T\}, \label{eqn:lead}
 \end{align}
 such that $\sigma_{Aj} \in (-\pi, \pi]$. Similarly, we define the defender's lead angle with respect to the attacker as,
 \begin{align}
     \sigma_{j A} = \gamma_{j} - \theta_{Aj}, ~\forall j \in \mathcal{D}, \label{eqn:rev_lead}
 \end{align}
 representing the angle subtended by the defender’s velocity vector with respect to the LOS angle from the attacker to the defender. 
 
 Based on the above-defined relative variables, we can express the relative motion kinematics between the attacker and the $j$ \textsuperscript{th} agent as,
\begin{align}
    \dot{r}_{Aj} &=v_j\cos\sigma_{jA}-v_A\cos\sigma_{Aj}\label{eqn:rel_dyn_1}\\
    r_{Aj}\dot{\theta}_{Aj} &= v_j\sin\sigma_{jA}-v_A\sin\sigma_{Aj}, \; \forall \; j \in \mathcal{D}\cup\{ T\}, \label{eqn:rel_dyn_2}
    \end{align}
using \eqref{eqn:kine_inertial}, \eqref{eqn:rel_dis}, \eqref{eqn:lead} and \eqref{eqn:rev_lead}. The relative formulation simplifies the representation of the attacker’s motion and plays a crucial role in characterizing threat regions posed by the defenders.

\subsection{Defender-induced Engagement Zones (EZs)}
In this work, each defender is modeled as a mobile platform, such as an aircraft equipped with interception capabilities (e.g., turret-mounted systems or missiles). The instantaneous position of the defender is treated as the origin from which this capability can be deployed. The objective is to design a guidance strategy that enables the attacker to remain safe from defenders by identifying and avoiding regions of the state space where interception becomes unavoidable under the current control policy. These regions are referred to as EZs.
\begin{definition}[Engagement Zone (EZ)]\label{def:ez}
The EZ of a defender is defined as the subset of the attacker's state space from which the defender’s engagement capability guarantees interception, assuming the attacker continues with its current heading and control strategy.
\end{definition}
\begin{remark}
    Therefore, EZ characterizes the region in the attacker's configuration space where it is at risk of being intercepted by the defenders without altering its trajectory. In the relative polar coordinate frame centered at the defender's point of origin, the \textit{engagement boundary} represents the outermost surface of the EZ, beyond which the attacker remains safe but crossing which guarantees interception if it maintains its current trajectory. 
\end{remark}
For the fast defenders $D_i$, with speed ratio $\mu_i=\dfrac{v_A}{v_{D_i}} \in(0,1)$, the boundary of its EZ is analytically characterized as
\begin{align}
\rho_i (\sigma_{Ai}) = \mu_i R_i \left[ \cos \sigma_{Ai} + \sqrt{\cos^2 \sigma_{Ai} - 1 + \frac{(R_i + c_i)^2}{\mu_i^2 R_i^2}} \right], \;\forall\; i \in \mathcal{D},\label{eqn:eng_bound}
\end{align}
where $\rho_i$ denotes the critical radial distance from a defender's point of origin, $R_i$ denotes the maximum engagement range, and $c_i$ denotes the capture radius. 
\begin{figure}[h!]
	\centering
	\begin{subfigure}[t]{.49\linewidth}
    \centering
    \includegraphics[width=\linewidth]{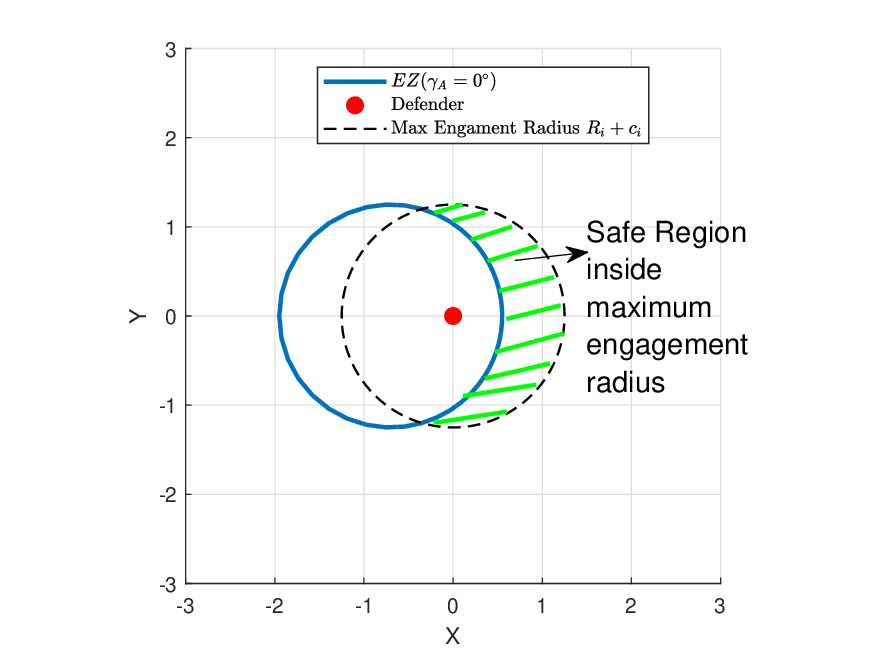}
    \caption{Safe region with EZ considerations.}
    \label{fig:sing_safe}
    \end{subfigure}
	\begin{subfigure}[t]{0.49\linewidth}
		\centering		\includegraphics[width=\linewidth]{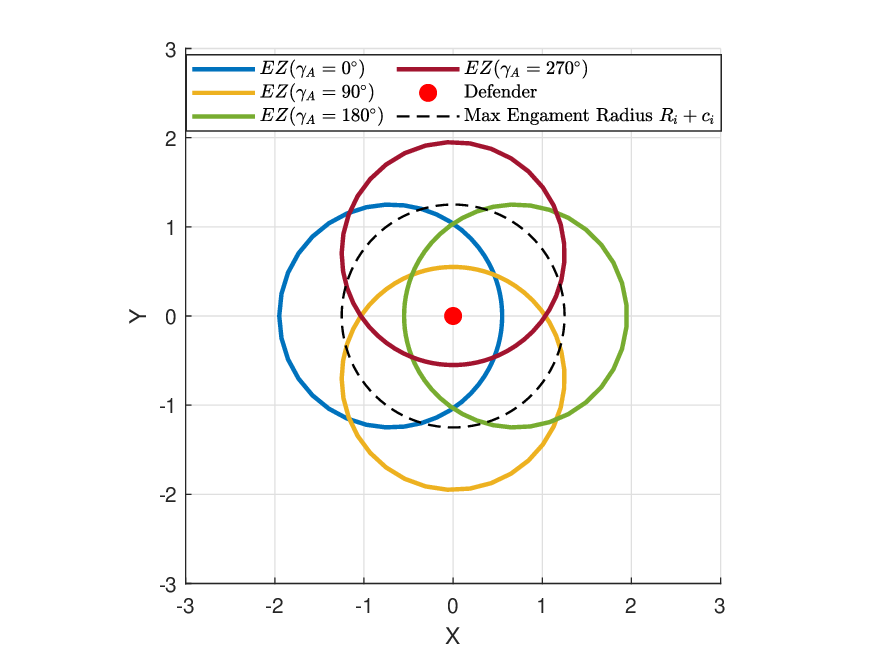}
		\caption{Comparison of EZ for different Attacker headings.}
		\label{fig:sing_safe_1}
	\end{subfigure}	
	\caption{Illustration of EZ for single defender ($[x_D, y_D]^\top=[0 \;0]^\top$ with $R_D=1$ m and $c_D=0.25$m).}
	\label{fig:ez_sing_comp}
\end{figure}
\Cref{fig:sing_safe} illustrates the engagement boundary for a single defender when the attacker maintains a constant heading $\gamma_A=0^\circ$, where it is observed that the EZ shifts left relative to the defender's maximum engagement radius, which creates a region within the nominal range where the attacker can remain safe. This observation highlights that using the maximum engagement radius as the engagement boundary is overly conservative, as it unnecessarily restricts feasible attacker trajectories and can lead to longer paths and increased interception time. More generally, the position of the engagement zone depends on the lead angle between the attacker and the defender. As the attacker’s strategy changes, the engagement boundary shifts accordingly, and causes the safe region within the maximum engagement radius to move in different directions, as illustrated in \Cref{fig:sing_safe_1}.
\begin{figure}[h!]
    \centering
    \includegraphics[width=0.5\linewidth]{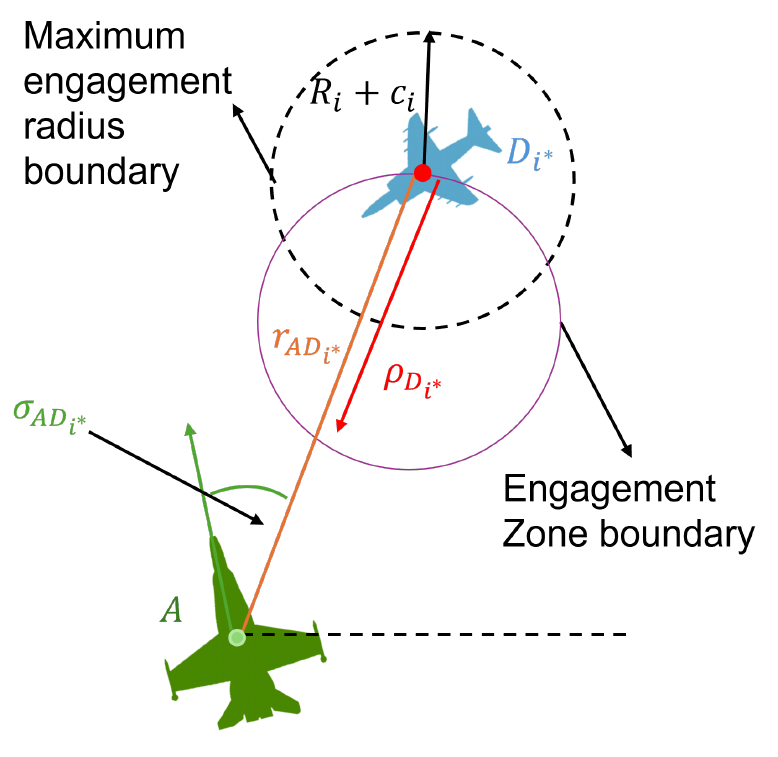}
    \caption{Engagement boundary relative to the maximum range.}
    \label{fig:eng_reg}
\end{figure}

Based on the above properties, we can define the region of state space inside each defender's maximum engagement range as
\begin{align}
\mathcal{E}^{MR} = \left\{ \left(r_{Ai},\theta_{Ai},\sigma_{Ai}\right) \in \mathbb{R}_{\geq 0} \times [0,2\pi)\times (-\pi, \pi] \, \mid r_{Ai}\leq R_i+c_i, i \in \mathcal{D} \right\}.\label{eq:emr}
\end{align}
The expression in \eqref{eq:emr} implies that remaining outside the set $\mathcal{E}^{MR}$ enables the attacker to conservatively maintain safety by remaining outside the maximum engagement radius of all the defenders. In contrast, based on \Cref{def:ez}, the EZ of a defender $D_i$ with respect to the attacker is defined as
\begin{align}  
\mathcal{E} := \left\{ \left(r_{Ai},\theta_{Ai},\sigma_{Ai}\right) \in \mathbb{R}_{\geq 0} \times [0,2\pi)\times (-\pi, \pi] \,\middle|\, r_{Ai} \leq \rho_i(\sigma_{AD_i}), i \in \mathcal{D} \right\}.
\end{align}
Crossing this EZ boundary places the attacker within the defender’s guaranteed capture region. Therefore, the engagement boundary accounts for relative engagement geometry, including velocity ratio and maneuvering capabilities, providing an accurate characterization of the defender's capture region. Additionally, employing EZ-based capture region allows the attacker to exploit maneuvering regions that would otherwise be deemed unsafe under conservative formulations. To better clarify the different regions of state-space, \Cref{fig:eng_reg} illustrates different regions around the defender for the current relative side bearing angle of the attacker (according to its current strategy). The dotted black circle of radius $R_i+c_i$ represents the maximum defender range that represents the conservative stand-off safety constraint. The purple colored circle represents the EZ boundary, inside which the attacker's capture is guaranteed. It is important to note that the sets $\mathcal{E}$ and $\mathcal{E}^{MR}$ generally overlap but do not coincide. While $\mathcal{E}^{MR}$ remains fixed with respect to the defender, the set $\mathcal{E}$ is subject to change as the attacker maneuvers around the defender.

\subsection{Design of the Control Objectives}
The objective of this paper is to design a nonlinear guidance law that enables the attacker to intercept a stationary target while remaining safe from threats posed by multiple moving defenders and respecting bounds on the attacker’s lateral acceleration. Achieving this objective requires the attacker to satisfy three coupled requirements.

First, the guidance law must ensure target interception by asymptotically reducing the relative distance between the attacker and the target to zero, i.e.,
\begin{align}
\lim_{t \to \infty} r_{AT}(t) = 0.
\end{align}
Second, the attacker must guarantee EZ avoidance by ensuring that its trajectory remains outside the engagement zone of every defender at all times. Formally, this requires
\begin{align}    
(r_{Ai},\theta_{Ai}, \sigma_{Ai}) \notin \mathcal{E}_i , ~\forall~  t\geq0\end{align}
which is equivalently expressed as maintaining $r_{Ai} > \rho_i$ for each defender. Additionally, we design the attacker's strategy to guarantee conservative safety by ensuring the attacker remains outside the defender's maximum engagement range. Such a design forms our baseline approach for comparison and is formally given as
\begin{align}
    (r_{Ai}, \theta_{Ai}, \sigma_{A_i}) \notin \mathcal{E}^{MR} , ~\forall t\geq0 \label{eqn:cns_saf_cond}
\end{align}
that will enable the attacker to maintain $r_{Ai}>R_i+c_i,\;\forall \; i \in \mathcal{D}$.

Third, the guidance strategy must satisfy actuator constraints by explicitly enforcing bounds on the attacker’s lateral acceleration during design. This requirement is expressed as
\begin{align}
\left\vert a_A(t) \right\vert \le a_A^{\max},
~ \forall t \ge 0,
\end{align}
where $a_A^{\max} > 0$ denotes the prescribed symmetric bound on the attacker’s lateral acceleration.

\section{Engagement-Zone-Aware Input-Constrained Intercept Guidance Design} \label{sec:our_app}
In this section, we develop the nonlinear guidance law for the attacker to intercept the target while preventing neutralization from the defender-induced EZs, while also accounting for physical limits on the attacker's lateral acceleration. We first formulate the safety conditions that ensure the attacker remains outside the regions reachable by the defenders’ maximum engagement range, by modeling unsafe regions as time-varying EZ boundaries rather than fixed standoff radii. Using these safety conditions, we construct an aggregate safety measure that combines the effects of all defender-induced EZs and includes a tunable parameter that controls how strongly multiple EZs contribute to the overall safety measure. Finally, we design a smooth switching guidance law for the attacker in the presence of multiple defenders that guarantees both the target interception and avoidance of defender EZs under the attacker's input constraints. 

To incorporate the defender-induced EZs in the control design, we define a safety function for the attacker with respect to each EZ as
\begin{align}
    b_i= r_{AD_i}-\rho_i, \; i \in \mathcal{D},
\end{align}
which depends on the attacker's position and velocity, yielding a characterization of the safe state-space based on the attacker's current strategy (heading angle). Note that $b_i>0$  represents the scenarios when the attacker is outside the i\textsuperscript{th} defender's EZ, that is, $r_{AD_i}>\rho_i$. On the other hand, $b_i\leq 0$ indicates scenarios when the attacker is present inside the i\textsuperscript{th} defender's EZ, implying that interception by the defender’s deployed capability is guaranteed. Therefore, maintaining safety requires $b_i>0$ for all defender-induced EZs.  Accordingly, the safe set for the attacker can be defined as
\begin{align}
    \mathcal{C} \coloneqq \left\{\, (x_A, y_A, \gamma_A) \in \mathbb{R}^2 \times [0,2\pi) \;\big|\; b_i > 0 \; \forall \;i\in\mathcal{D} \,\right\}.
    \label{eq:safe_set}
\end{align}
To ensure the safety of the attacker (forward invariance of set $\mathcal{C}$), it is necessary to characterize how each safety function evolves along the system trajectories. Therefore, differentiating $b_i$ with respect to time and using the relative kinematics model, we obtain the dynamics of the safety function as
\begin{align}
\dot{b}_i&=v_i\cos\sigma_{iA}-v_A\cos\sigma_{Ai} +\nabla\rho_i\frac{v_i\sin\sigma_{iA}}{r_{Ai}}-\nabla\rho_i\frac{v_A\sin\sigma_{Ai}}{r_{Ai}}-\nabla\rho_i\frac{a_A}{v_A} \label{eqn:bdot},
\end{align}
where $\nabla\rho_i$ denotes the gradient of engagement boundary of the $i$\textsuperscript{th} defender, computed using \eqref{eqn:eng_bound}, as
\begin{align}
\nabla \rho_i = -\mu_i R_i\sin\sigma_{AD_i} \left[1+\frac{\cos\sigma_{AD_i}}{\sqrt{\cos^2 \sigma_{A D_i} - 1 + \dfrac{(R_i + c_i)^2}{\mu_i^2 R_i^2}}}\right].    
\end{align}
From the expression of $\dot{b}_i$, it can be observed that $a_A$ has relative degree one with respect to the safety constraint $b_i$. 

To account for multiple EZs, we aggregate the safety function for all the EZs using the log-sum-exp (soft-minimum) operator, 
\begin{align}
    h = -\beta\log\left(\sum_{i \in\mathcal{D}}e^{-b_i/\beta}\right),\label{eq:h}
\end{align}
where $\beta\in\mathbb{R}_{>0}$ is a user-defined constant that determines how closely $h$ approximates the minimum values of individual safety functions.
\begin{remark}\label{rem:agg_safety_cond}
    The safety aggregation function in \eqref{eq:h} satisfies
    \begin{align}
        \min_{i\in\mathcal{D}}b_i-\beta\log n \leq h \leq \min_{i\in\mathcal{D}}b_i,  \label{eqn:agg_safety_bounds}
    \end{align}
    where $n$ denotes the number of defenders. The upper bound implies that if $h>0$, then $\min_{i\in\mathcal{D}} b_i > 0$, which ensures $b_i > 0$ for all EZs (the attacker remains outside every defender’s EZ). If all $b_i>0$, then $h$ is guaranteed to be positive if, $\min_{i\in\mathcal{D}} b_i > \beta \ln n,$ which provides a design condition on the smoothing parameter $\beta$, obtained from the lower bound of $h$ in \eqref{eqn:agg_safety_bounds}. Particularly, choosing $\beta$ sufficiently small makes $h$ a close approximation of $\min_i b_i$ (or closest engagement boundary to the attacker), thereby tightening the equivalence between the condition $h>0$ and the requirement $b_i>0~\forall~i \in \mathcal{D}$. On the other hand, selecting larger values of $\beta$ allows the aggregation to account for the combined effect of multiple EZs, rather than focusing solely on the nearest one.
\end{remark}
We can now define the safe set associated with the aggregated safety function as
\begin{align}
    \mathcal{C}_h \coloneqq \left\{(x_A, y_A, \gamma_A) \in \mathbb{R}^2 \times [0,2\pi) \;\big| h>0\right\}.
\end{align}
When the smoothing parameter $\beta \to 0$,  $\mathcal{C}_h\approx \mathcal{C}$ closely approximates the point-wise minimum of the individual margins. However, when $\beta$ is finite, the aggregation provides an inner approximation of the true safe set, that is, $\mathcal{C}_h\subset  \mathcal{C}$. 
\begin{figure}[h!]
    \centering    \includegraphics[width=0.85\linewidth]{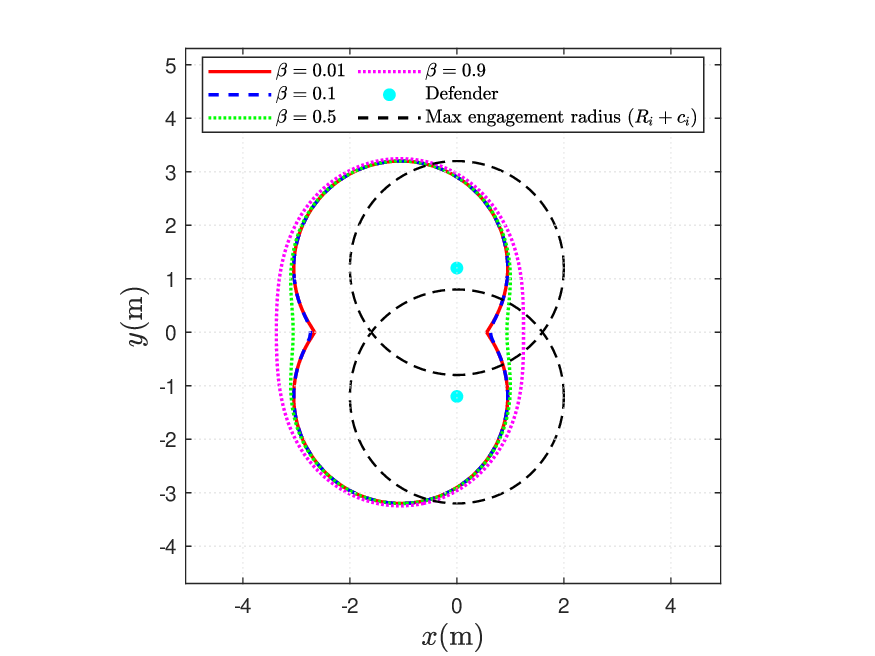}
    \caption{Variation of the safety function boundary $(h=0)$ for different choices of the parameter $\beta$.}
    \label{fig:var_bet}
\end{figure}

\Cref{fig:var_bet} compares the boundary defined by $h=0$ for two defenders under different values of the parameter $\beta$ for a fixed attacker's heading angle, where it can be observed that for small values of $\beta$, the aggregated safety function closely approximates the pointwise minimum of the individual safety functions. As a result, the overall unsafe region is essentially formed by the simple union of the individual EZs, producing a sharp concave notch between the two EZs. In practice, such a concave region may cause the attacker to enter unsafe regions, as it may not possess sufficient control authority to avoid it. In contrast, larger values of $\beta$ yield a smoother aggregation that results in a safety boundary that blends the influence of both EZs and eliminates the concave notch to produce a more regular and maneuverable safety region. Additionally, since $h$ aggregates the individual safety margins through a smooth log-sum-exp map, its time derivative is a weighted combination of the individual safety functions' rates, given as
\begin{align}
    \dot{h} &=\sum_{i\in \mathcal{D}}\dfrac{e^{-b_i/\beta}}{\sum_{i\in\mathcal{D}}e^{-b_i/\beta}}\dot{b}_i = \sum_{i\in \mathcal{D}}w_i\dot{b}_i \nonumber\\&=\underbrace{\sum_{i\in \mathcal{D}}w_i\left(v_i\cos\sigma_{iA} + \frac{v_i\nabla\rho_i}{r_{Ai}}\sin\sigma_{iA}-v_A\cos\sigma_{Ai}-\frac{v_A\nabla\rho_i}{r_{Ai}}\sin\sigma_{Ai} \right)}_{f_h}- \underbrace{{\sum_{i\in \mathcal{D}}w_i \nabla\rho_i}}_{g_h} \frac{a_A}{v_A}, \label{eqn:hdot}
\end{align}
obtained by differentiating \eqref{eq:h} with respect to time, where $\sum_{i\in\mathcal{D}}w_i=1$ with $w_i\geq0$. 

It can be observed from the above equation that $a_A$ also has relative degree one with respect to the aggregated safety function $h$. In this work, we explicitly account for the smooth lateral acceleration saturation model \eqref{eqn:sat_model}, making the commanded acceleration $a_A^C$ the control input to be designed. Consequently, an ideal design enforcing $h>0$ would require regulating $\ddot{h}$, which depends on the attacker’s dynamics as well as on the defender’s (typically unknown) maneuvering strategy. This could be verified by taking the derivative of $\dot{h}$ as in \eqref{eqn:hdot} with respect to time, where both $a_A^C$ and $a_i, \; i \in \mathcal{D}$ explicitly appear, confirming that $h$ has relative degree two with respect to $a_A^C$. 

To circumvent these issues and retain implementability under input constraints, we introduce a tightening parameter $\Delta>0$ and require the attacker to maintain $h\geq\Delta$ instead of $h>0$. This margin ensures that the attacker remains sufficiently far from every defender’s EZ, thereby preserving enough control authority to maneuver away even under bounded control inputs. To this end, the resulting tightened safe set is defined as
\begin{align}
    \mathcal{C}_\Delta \coloneqq \left\{(x_A, y_A, \gamma_A) \in \mathbb{R}^2 \times [0,2\pi) \;\big| h>\Delta,\; \Delta>0\right\}.
\end{align}
 which, by construction, satisfies $\mathcal{C}_{\Delta} \subset \mathcal{C}_h \subset \mathcal{C} \subset{\mathcal{E}}. $ To enable only the minimum necessary tightening of the safety constraint, we propose a time-varying parameter,
\begin{align}
    \Delta(t)
    \coloneqq \max\left\{0,\frac{1}{p_1}
    \left[
        |g_h(t)|\frac{a_{\max}}{v_A} - f_h(t)
    \right]\right\},
    \label{eq:Delta_final}
\end{align}
based on the worst-case evolution of safety boundaries under all admissible inputs. From \eqref{eqn:hdot}, the worst-case instantaneous rate 
of decrease of $h$ over all admissible accelerations is given by,
\begin{align}
    \dot{h}_{\min}(t)
    &= \min_{|a_A|\le a_{\max}} 
        \left\{ 
            f_h(t) - g_h(t)\frac{a_A}{v_A} 
        \right\}
        = f_h(t) - |g_h(t)|\,\frac{a_{\max}}{v_A},
    \label{eq:hdot_min_here}
\end{align}
since for $g_h(t)>0$ and  $g_h(t)<0$ the choices $a_A=a_{\max}$ and $a_A= -a_{\max}$, respectively, yield the largest negative contribution to $\dot{h}$ to produce the maximum possible decrease in the safety measure $h$. Additionally, from \eqref{eqn:sat_model}, a local approximation of the actuator response time under worst-case remaining control authority can be obtained as
\begin{align}
    \tau_\mathrm{resp} =\dfrac{1}{p_1},
\end{align}
since $\dot{a}_A\approx a_A^C - p_1a_A$ near saturation boundaries. This implies that $p_1$ represents the actuator’s effective response speed and $\tau_\mathrm{resp}$ represents the characteristic time required for the actual acceleration to track the commanded input $a_A^C$.
\begin{lemma}
[Safety Tightening Buffer]\label{lem:Delta}
Consider the dynamics of the safety constraint \eqref{eqn:hdot} and the designed tightening parameter as in \eqref{eq:Delta_final}. Then, for any $h(t_0)>\Delta(t_0), \forall t_0\geq 0$, the safety constraint remains non-negative over the response window $[t_0, t_0+\tau_\mathrm{resp}]$ for all  admissible control inputs, that is, 
\begin{align}
h(t)>0, \;    \forall t \in [t_0, t_0+\tau_\mathrm{resp}]. 
\end{align}    
\end{lemma}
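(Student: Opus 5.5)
The plan is to bound $h$ from below along any admissible trajectory using the worst-case decay rate \eqref{eq:hdot_min_here}, and then integrate over the response window.

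\textbf{Step 1 (pointwise bound).} For every admissible acceleration $|a_A|\le a_{\max}$, which Remark~\ref{rmk:inputsat} guarantees is always maintained by the saturation model \eqref{eqn:sat_model}, the expression \eqref{eqn:hdot} gives
\[
\dot h(t)\ \ge\ \dot h_{\min}(t)=f_h(t)-|g_h(t)|\frac{a_{\max}}{v_A}.
\]
The definition \eqref{eq:Delta_final} can be rewritten as $\Delta(t)\ge -\tau_{\mathrm{resp}}\,\dot h_{\min}(t)$, together with $\Delta(t)\ge 0$. Equivalently, $\dot h_{\min}(t)\ge -p_1\Delta(t)$.

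\textbf{Step 2 (integration).} For $t\in[t_0,t_0+\tau_{\mathrm{resp}}]$,
\[
h(t)\ \ge\ h(t_0)+\int_{t_0}^{t}\dot h_{\min}(s)\,ds\ \ge\ h(t_0)-p_1\int_{t_0}^{t}\Delta(s)\,ds .
\]
Suppose $\Delta(s)$ stays at most $\Delta(t_0)$ on the window. This holds if $\dot h_{\min}$ is frozen at its value at $t_0$, which is the local approximation implicit in the definition of $\tau_{\mathrm{resp}}$. Then the integral is bounded by $p_1\,\tau_{\mathrm{resp}}\,\Delta(t_0)=\Delta(t_0)$, and so $h(t)\ge h(t_0)-\Delta(t_0)>0$. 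The strict inequality follows from the hypothesis $h(t_0)>\Delta(t_0)$. In the case $\Delta(t_0)=0$ we have $\dot h_{\min}\ge 0$, so $h$ is nondecreasing and the claim is immediate.

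\textbf{Main obstacle.} The difficulty is justifying the use of $\Delta(t_0)$ over the entire window. The quantities $f_h$ and $g_h$ vary with the state, including through the defender motions, so $\dot h_{\min}$ is not actually constant. I would handle this in one of two ways.

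The first option is to state the result explicitly under the paper's local (first-order) approximation, where $\dot h_{\min}(s)\approx\dot h_{\min}(t_0)$ on an interval of length $\tau_{\mathrm{resp}}$. This matches how $\tau_{\mathrm{resp}}$ was derived.

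The second, more rigorous option is to apply a Grönwall/comparison argument. The key assumption would be $\sup_{s\in[t_0,t_0+\tau_{\mathrm{resp}}]}\Delta(s)\le\Delta(t_0)$, or else absorb the variation into a Lipschitz bound on $\dot h_{\min}$. Using boundedness of $v_i$, $v_A$, $a_A$, and of $\nabla\rho_i$ (finite since $(R_i+c_i)^2>\mu_i^2R_i^2$), such a bound yields an $O(\tau_{\mathrm{resp}}^2)$ correction term. That term is dominated by the strict margin $h(t_0)-\Delta(t_0)$ when $p_1$ is large enough.

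I would try the first route and note the approximation clearly.
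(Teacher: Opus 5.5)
Your first route is essentially the paper's argument. The paper likewise integrates $\dot h\ge\dot h_{\min}$ and then simply asserts $\dot h_{\min}(s)\ge\dot h_{\min}(t_0)$ on $[t_0,t_0+\tau_{\mathrm{resp}}]$ (which implies your condition $\Delta(s)\le\Delta(t_0)$) before applying \eqref{eq:Delta_final}, so the freezing step you flag is precisely the unproved step in the paper as well. Your second, Lipschitz-based route would not recover the lemma as stated: it only gives $h(t)\ge h(t_0)-\Delta(t_0)-L\tau_{\mathrm{resp}}^2/2$ (with $L$ a Lipschitz constant of $\dot h_{\min}$), while the hypothesis $h(t_0)>\Delta(t_0)$ permits an arbitrarily small margin, so you would need to enlarge $\Delta$ by that correction or strengthen the hypothesis rather than appeal to large $p_1$.
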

\begin{proof}
From \eqref{eq:hdot_min_here}, it is evident that $\dot{h}(t)\geq \dot{h}_{\min}$, which implies that the instantaneous decrease in $h$ is lower bounded by worst case decrement $\dot{h}_{\min}$. On integrating the above inequality, we obtain
\begin{align}
    h(t)\geq h(t_0) + \int_{t_0}^t\dot{h}_{\min}(s)ds,  ~~t\in[t_0,t_0+\tau_{\mathrm{resp}}].
\end{align}
Since $\dot{h}_{\min}(s)\geq\dot{h}_{\min}(t_0)$ for all $s\in[t_0, t_0+\tau_\mathrm{resp}]$, we can simplify the above expression to obtain a conservative linear extrapolation safety function
\begin{align}
    h(t) \;\ge\; h(t_0) + (t-t_0)\,\dot h_{\min}(t_0), 
    ~~ t\in[t_0,t_0+\tau_{\mathrm{resp}}], \label{eqn:h_approx}
\end{align}
over a short interval of length $\tau_{\mathrm{resp}}$. 

If $\dot h_{\min}(t_0) \ge 0$, then $h(t)\ge h(t_0)\ge 0$ on 
$[t_0,t_0+\tau_{\mathrm{resp}}]$ which implies that $h(t)$ remains non-negative over the finite response window. However, when $\dot{h}_{\min}(t_0)\leq0$, we have
\begin{align}
    h(t_0 +\tau_\mathrm{resp}) \geq h(t_0) + \tau_\mathrm{resp}\dot{h}_{\min}(t_0)
\end{align}
using \eqref{eqn:h_approx}. The above expression can be rewritten using $h(t_0)>\Delta(t_0)$ as
\begin{align}
    h(t) \geq \Delta(t_0) + \tau_\mathrm{resp} \dot{h}_{\min}(t_0),
    \end{align}
    which provides the condition on the $\Delta(t)$ as proposed in \eqref{eq:Delta_final} to ensure  $h(t)>0$. This concludes the proof.
\end{proof}
\begin{figure}
    \centering    \includegraphics[width=0.8\linewidth]{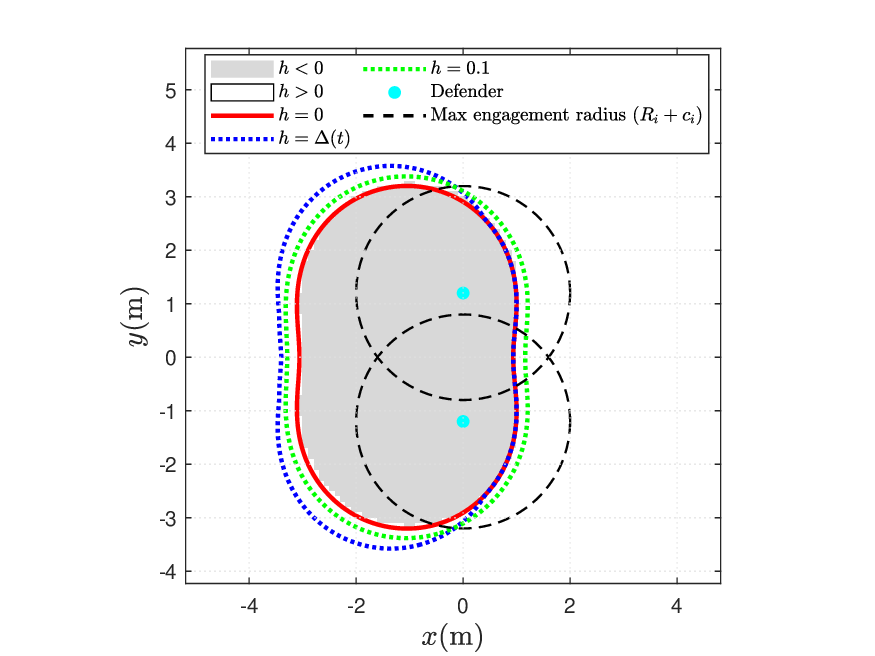}
    \caption{Variation of the safety function boundary with tightening parameter $\Delta$.}
    \label{fig:var_del}
\end{figure}

\Cref{fig:var_del} compares the engagement boundary under different tightening strategies of the aggregate safety function with two defenders at a constant attacker's heading ($\gamma_A=0 ^\circ$). The red solid line represents the nominal engagement boundary corresponding to zero tightening corresponding to $\Delta(t)=0$, while the green dotted line represents the engagement boundary under a constant tightening margin corresponding to $\Delta(t)=0.1$ that uniformly enlarges the unsafe region in all directions relative to the nominal case. In comparison, the proposed state-dependent tightening parameter expands the nominal unsafe region non-uniformly as depicted by the blue dotted boundary. Specifically, the unsafe set is enlarged on the left side of the nominal engagement boundary, while no expansion occurs on the right side. This asymmetry is consistent with the attacker’s heading $\gamma_A = 0^\circ$ since when approaching from the left, the nominal engagement boundary lies on a collision trajectory that necessitates tightening. However, the same heading cannot drive the attacker into the EZ for positions to the right of the nominal boundary and therefore does not require any expansion of the engagement boundary. 
\begin{remark}
    Since the proposed tightening function $\Delta(t)$ in  \eqref{eq:Delta_final} utilizes smooth functions $f_h$ and $g_h$ that are dependent on the relative kinematics and vehicle states, the time derivative of the tightening function is bounded, that is, $|\dot{\Delta}(t)|<L_\Delta$, for some finite constant $L_\Delta$>0.
    \label{rem:der_tighten}
\end{remark}

Up to this point, the safety constraints have been characterized, and a tightening mechanism has been introduced to enlarge the effective unsafe set, ensuring that worst-case system behavior cannot drive the attacker into the actual EZs. In the considered problem, the attacker must simultaneously pursue target interception while avoiding safety violations, which naturally introduces competing objectives. To address this, we develop a switching condition and a corresponding guidance law that allow the attacker to transition between target-interception and EZ-avoidance modes. The proposed lateral acceleration component that steers the attacker away from all defender-induced engagement zones is given by
\begin{align}
    a_A^b = \dfrac{v_AK_s\left(h-\Delta\right)-v_A\displaystyle\sum_{i\in \mathcal{D}}w_i\left(v_i\cos\sigma_{iA} + \dfrac{v_i\nabla\rho_i}{r_{Ai}}\sin\sigma_{iA}-v_A\cos\sigma_{Ai}-\dfrac{v_A\nabla\rho_i}{r_{Ai}}\sin\sigma_{Ai}\right)}{\displaystyle\sum_{i\in\mathcal{D}}w_i\nabla\rho_i}, \label{eqn:safe_ctrl}
\end{align}
where $K_s$ denotes the controller gain to be designed. This component is obtained by imposing $\dot{h}=-K_s\left(h-\Delta(t)\right)$ that guarantees forward invariance of the tightened safe set $\mathcal{C}_{\Delta}$, thus ensuring that the attacker remains outside all defender-induced EZs while the safety margin decays in a controlled manner. 

For target interception, we propose the attacker's lateral acceleration component as
\begin{align}
a_A^{T} = -K_I v_A \sigma_{AT} - \dfrac{v_A^2 \sin \sigma_{AT}}
{r_{AT}} \label{eqn:inter_law},
\end{align}
where $K_I$ is the controller gain for interception to be designed. The above component of the lateral acceleration is obtained by ensuring $\dot{\sigma}_{AT}=-K_1\sigma_{AT}$, which aligns the attacker's velocity along the attacker-to-target LOS in order to steer the attacker towards the target. This results in pure pursuit behavior, which is sufficient to guarantee interception of a stationary target. 

To determine when the attacker may violate the safety constraint, we define the boundary based on the aggregate safety function rate as
\begin{align}
   \psi(a_A) =f_h + g_h\dfrac{a_A}{v_A}+\alpha_h(h-\Delta)=0, \label{eqn:switch_h_dot_cond}
\end{align}
where $\alpha_h(\cdot)$ is an extended class-$\mathcal{K}$ function\footnote{An extended class-$\mathcal{K}$ function is a continuous and strictly increasing function 
$\phi:(-\ell_1,\ell_2)\to\mathbb{R}$ satisfying $\phi(0)=0$, where $\ell_1,\ell_2>0$. 
Unlike a standard class-$\mathcal{K}$ function, whose domain is $[0,\infty)$, 
an extended class-$\mathcal{K}$ function is defined on an interval containing the origin 
and therefore permits negative arguments.}. The first two terms in \eqref{eqn:switch_h_dot_cond} correspond to $\dot{h}(a_A)$, which represent how $a_A$ influences the instantaneous rate of change of the safety function. The last term is always a positive quantity that acts as a corrective barrier, preventing the safety function from decreasing too rapidly. Therefore, the $\psi(a_A)=0$ represents the boundary at which the evolution of the safety function exactly balances the barrier correction term. Consequently, $\psi(a_A)> 0$ indicates that the attacker's applied lateral acceleration satisfies the safety constraint, whereas $\psi(a_A)\leq0$ indicates that the applied control input would lead to violation of the safe set. In this work, we select $\alpha_h = K_s\left(h-\Delta\right)$, consistent
with the safety feedback term in \eqref{eqn:safe_ctrl}, so that purely applying this term ensures that the tightened safety condition is never violated.
To handle the trade-off between safety and target-interception, we blend $a_A^b$ and $a_A^T$ via a switching function/convex combination of the form
\begin{align}
    a_A^d = \alpha\left(\psi\left(a_A^T\right)\right) a_A^b + \left(1-\alpha\left(\psi\left(a_A^T\right)\right)\right)a_A^T,
    \label{eqn:blend_ctrl}
\end{align}
where $\alpha: \mathbb{R}\to[0,1]$ is a continuous, locally Lipschitz, and monotonically decreasing switching function. Thus there exists $\varepsilon>0$ such that
\begin{align}
    \alpha\left(\psi\left(a_A^T\right)\right) = \begin{cases}
        1, &\text{if}\;\psi\left(a_A^T\right)\leq0,\\
        0, &\text{if}\;\psi\left(a_A^T\right)\geq\varepsilon
    \end{cases}. \label{eqn:switch}
\end{align}
For brevity, we denote $\psi\coloneqq\psi\left(a_A^T\right)$ and $\alpha\coloneqq\alpha\left(\psi\left(a_A^T\right)\right)$, representing $\psi$ evaluated at $a_A^T$ and $\alpha$ evaluated at $\psi$, respectively, unless otherwise stated. 

Therefore, based on the switching strategy in \eqref{eqn:switch}, the attacker's state-space can be divided into three regions, as presented in \Cref{tb:regions}. 
\begin{table}[h!]
\centering
\begin{tabular}{p{1cm} p{2.2cm} p{2.4cm} p{2.6cm} p{6.2cm}}
\toprule
\multicolumn{1}{c}{\textbf{Region}} &
\multicolumn{1}{c}{\textbf{Mode}} &
\multicolumn{1}{c}{\textbf{Condition on $\alpha$}} &
\multicolumn{1}{c}{\textbf{Guidance Law}} &
\multicolumn{1}{c}{\textbf{Interpretation}} \\
\midrule
I
& Pure target interception 
& $\alpha = 0$ 
& $a_A^d = a_A^T$ 
& The attacker is sufficiently far from all engagement zones; the guidance law focuses entirely on target interception. \\

II
& Pure EZ avoidance 
& $\alpha = 1$ 
& $a_A^d = a_A^b$ 
& The attacker is near an engagement boundary; the safety component is fully activated to prevent entry into the engagement zone. \\

III
& Blended mode 
& $0<\alpha<1$ 
& $a_A^d = (1-\alpha)a_A^T + \alpha a_A^b$ 
& Both interception and safety terms contribute to the commanded acceleration. This region represents the transition between interception and safety dominance, where $\alpha$ varies continuously due to its locally Lipschitz and nonincreasing properties. \\

\bottomrule
\end{tabular}
\caption{Operating regions of the proposed guidance law.}
\label{tb:regions}
\end{table}

\begin{assumption} 
The time derivative of $\alpha$ is uniformly bounded, that is, there exists $L_\alpha>0$ such that $|\dot{\alpha}|<L_\alpha~\forall~t\geq 0$, ensuring smooth transition between various guidance modes. \label{asm:bound_switch}
\end{assumption}
\begin{lemma}\label{lem:safety}
   The switching condition in \eqref{eqn:switch} defines the boundary $\psi(a_A^T)=0$, 
which identifies the region beyond which applying only the target-interception 
lateral acceleration in \eqref{eqn:inter_law} may compromise safety. 
This condition, therefore, characterizes the threshold beyond which the desired lateral acceleration $a_A^d$ can potentially violate the tightened safety 
requirement $h(t)>\Delta(t)$.
\end{lemma}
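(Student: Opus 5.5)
The natural route is a comparison argument on the tightened margin $e(t)\coloneqq h(t)-\Delta(t)$, which turns the switching condition into a sign statement about $\dot h$ relative to the barrier requirement. The steps would go as follows.

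First, substitute $a_A=a_A^T$ from \eqref{eqn:inter_law} into \eqref{eqn:hdot}. Read $\psi(a_A^T)$ in \eqref{eqn:switch_h_dot_cond} as the quantity $\dot h|_{a_A=a_A^T}+K_s(h-\Delta)$, with the sign convention of $g_h$ taken consistently with \eqref{eqn:hdot}. Then the inequality $\psi(a_A^T)\ge 0$ is exactly the control-barrier-type condition
\[
\dot h\big|_{a_A^T}\ \ge\ -K_s\,(h-\Delta).
\]

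Second, treat the two sides of the boundary separately.

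On the side $\psi>0$, the condition above gives the differential inequality $\dot e\ge -K_s e-\dot\Delta$. Using the bound $|\dot\Delta|<L_\Delta$ from Remark~\ref{rem:der_tighten} and the comparison lemma, I would obtain
\[
e(t)\ \ge\ e(t_0)\,e^{-K_s(t-t_0)}-\frac{L_\Delta}{K_s}\left(1-e^{-K_s(t-t_0)}\right).
\]
So while $\psi(a_A^T)>0$, applying $a_A^T$ alone keeps $h$ above $\Delta$, up to a margin that vanishes as $K_s$ grows or as $\Delta$ becomes stationary. In the interior region where $\psi\ge\varepsilon$, \eqref{eqn:switch} gives $\alpha=0$ and $a_A^d=a_A^T$, so safety is retained there.

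On the side $\psi\le 0$, the barrier inequality fails. I would show that at points where $h=\Delta$ (that is, $e=0$) we then have $\dot h|_{a_A^T}\le 0$. Moreover, wherever $\psi<0$ strictly, $\dot e<-\dot\Delta$, so $e$ can be driven negative and the condition $h>\Delta$ can be lost. This shows that $\psi=0$ is precisely the threshold beyond which $a_A^T$ may compromise safety.

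Third, I would tie this to Lemma~\ref{lem:Delta}. Because $\Delta$ was built from the worst-case rate $\dot h_{\min}$ over the response window $\tau_{\mathrm{resp}}$, crossing into $\psi\le 0$ does not immediately imply $h<0$. It only implies that the tightened requirement $h>\Delta$ may be violated, which is the statement of the lemma concerning $a_A^d$. Since $a_A^d$ equals $a_A^T$ whenever $\psi\ge\varepsilon$, and it is the blending in \eqref{eqn:blend_ctrl} that reacts once $\psi$ drops below $\varepsilon$, the boundary $\psi(a_A^T)=0$ characterizes where $a_A^d$ could potentially violate $h>\Delta$ unless $\alpha$ activates $a_A^b$.

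The main obstacle is the time-variation of $\Delta$. The condition $\psi\ge 0$ controls $\dot h$, not $\dot h-\dot\Delta$. A clean equivalence therefore holds only up to the $L_\Delta/K_s$ term, or where $\Delta$ is locally constant, for example on the branch of \eqref{eq:Delta_final} where the $\max$ is zero. I would first try to absorb $\dot\Delta$ by stating the result as practical (ultimately bounded) safety. If that is not enough, I would assume $K_s$ is large enough that $L_\Delta/K_s$ is dominated by the margin $\varepsilon$ of the switching band.
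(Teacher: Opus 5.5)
Your first two steps handle the $a_A^T$-only half of the statement correctly. Your comparison-lemma treatment of $\dot\Delta$ is also more careful than the paper's, which simply claims that $\psi(a_A^T)\ge 0$ makes $h-\Delta$ nondecreasing.

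The gap is in passing from $a_A^T$ to $a_A^d$, which is what the second sentence of the lemma concerns. Your argument covers only the set $\psi(a_A^T)\ge\varepsilon$, where $a_A^d=a_A^T$. In the band $0<\psi(a_A^T)<\varepsilon$ you are on the safe side of the claimed threshold, yet $a_A^d$ is a genuine blend of $a_A^T$ and $a_A^b$, and your comparison bound says nothing about it. For $\psi(a_A^T)\le 0$ you only assert that the blending ``reacts''. Your third step is therefore an assertion rather than a proof of the claim about $a_A^d$.

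The missing idea, which is the core of the paper's proof, is structural. First, $\psi$ is affine in the applied acceleration. Second, $a_A^b$ in \eqref{eqn:safe_ctrl} is constructed precisely so that $\psi(a_A^b)=0$: it imposes $\dot h=-K_s(h-\Delta)$, under the same sign convention for $g_h$ that you fixed. An affine map preserves convex combinations, so
\[
\psi(a_A^d)=\alpha\,\psi(a_A^b)+(1-\alpha)\,\psi(a_A^T)=(1-\alpha)\,\psi(a_A^T).
\]
This gives the following in each region:
\[
\psi(a_A^d)=\psi(a_A^T)\ge\varepsilon \ \text{(Region I)},\qquad \psi(a_A^d)>0 \ \text{(Region III)},\qquad \psi(a_A^d)=0 \ \text{(Region II)}.
\]
So the sign of $\psi(a_A^T)$ is exactly what decides whether the command would violate the barrier inequality if $a_A^b$ were not activated. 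With $\alpha$ chosen as in \eqref{eqn:switch}, the blended command satisfies $\psi(a_A^d)\ge 0$ everywhere.

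Once you have this identity, your comparison argument transfers verbatim to $a_A^d$ in all three regions, neglecting the tracking error $z$ as both you and the paper do at this stage. It yields $h-\Delta>-L_\Delta/K_s$ whenever $h(t_0)>\Delta(t_0)$. The appeal to \Cref{lem:Delta} in your third step then becomes unnecessary.
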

\begin{proof}
  Using \eqref{eqn:hdot}, we obtain the dynamics of the aggregated safety function at $a_A=a_T$ as
  \begin{align}
      \dot{h}({a_T}) = f_h + g_h\frac{a_T}{v_A}.
  \end{align}
  To guarantee that $\dot{h}({a_T})$ above does not violate the tightened safety condition, one must ensure the condition,
  \begin{align}
      \dot{h}(a_T) +\alpha_{h}(h-\Delta)=\psi(a_A^T) = f_h + g_h\frac{a_T}{v_A}  +\alpha_{h}(h-\Delta) \geq0,  \end{align}
      which on comparison with \eqref{eqn:switch_h_dot_cond} provides the switching boundary $\psi(a_T)=0$ as presented in \eqref{eqn:switch}. Moreover, it is important to note that $a_A^b$ satisfies $\psi(a_A^b)=0$ by construction, obtained by ensuring $\dot{h}=-K_s\left(h-\Delta(t)\right)$. Since $\psi$ is affine in $a_A$, the proposed blended control in \eqref{eqn:blend_ctrl} will result in
        \begin{align}
            \psi(a_A^d)=\left(1-\alpha\right)\psi(a_T),\;\forall\; \alpha \in [0,1]. \label{eqn:res_bound}
        \end{align}
      Using the above expression, we can analyze the safety condition in the three regions of the state-space. In Region I, when $\psi(a_A^T)>\epsilon$, we have $\alpha=0$ and therefore $\psi(a_A^d)=\psi(a_A^T)$. Since $\psi(a_A^T)\geq0$, it follows that $\psi(a_A^d)\geq0$, ensuring safety. In Region II, the safety component \eqref{eqn:safe_ctrl} is fully active. When $\psi(a_A^T)\leq0$ and $\alpha=1$, we obtain $\psi(a_A^d)=0$. In Region III, when $0<\psi(a_A^T)<\epsilon$, we have $0<\alpha<1$. Using \eqref{eqn:res_bound}, it follows that $\psi(a_A^d)>0$ since $\psi(a_A^T)>0$.

Therefore, ensuring $\psi(a_A^T)\geq0$ is sufficient to guarantee safety $\psi(a_A^d)\geq0$ across all three regions. Under this condition, $\psi(a_A^T)\geq0$ implies that the aggregate safety function has a nonnegative rate of change, so the safety margin $h-\Delta$ is nondecreasing and the constraint $h>\Delta$ is preserved. On the other hand, when $\psi(a_A^T)<0$, the aggregate safety function decreases, which may drive $h$ to $\Delta$ in finite time and violate the safety condition $h(t)>\Delta$. This concludes the proof.      
 \end{proof}

To account for the lateral acceleration dynamics and the constraints in the permissible limits, we define the tracking error between the actual and the desired lateral acceleration as, 
\begin{align}
    z =a_A-a_A^d, \label{eqn:pseudi_ctrl}
\end{align}
representing the offset between the actual and the desired values while representing more realistic dynamics. Driving this tracking error to zero, that is, $\lim_{t \to \infty}z\to 0$, ensures that the attacker accurately follows the desired guidance strategy in \eqref{eqn:blend_ctrl}. To ensure this, we design the commanded lateral acceleration for the attacker as
\begin{align}
    a_A^c =& \dfrac{p_1a_A + \dot{a}_A^d+(\alpha-1)\dfrac{\sigma_{AT}}{v_A} +\alpha \left(h-\Delta\right)\dfrac{f_h}{v_A} - K_az}{1-\left(\dfrac{a_A}{a_{\max}}\right)^n} ,\label{eqn:comd_acc}
\end{align}
where $K_a\in\mathbb{R}_{> 0}$ is the controller gain. The first two terms in the numerator of $a_A^c$ compensate for the lateral acceleration dynamics and the time variation in the desired lateral acceleration. The third and fourth terms capture the contributions from the interception and safety guidance objectives blended via the switching function and represent the backstepping terms from the recursion. The last term in the numerator $a_A^c$ represents the corrective input that drives the acceleration tracking error to zero.
\begin{theorem}
    Consider the target–attacker–defender relative kinematics \eqref{eqn:rel_dyn_1}–\eqref{eqn:rel_dyn_2} with the input saturation model \eqref{eqn:sat_model}. Suppose the attacker’s commanded lateral acceleration is given by \eqref{eqn:comd_acc}, with the resulting actual lateral acceleration governed by \eqref{eqn:blend_ctrl}. Then the attacker intercepts the target while remaining outside the defenders' engagement boundaries and respecting the actuator bounds, provided the controller gains satisfy
    \begin{align}    K_I>\sup_{\alpha\in(0,1)}\frac{2v_A+\epsilon_\sigma + L_\alpha}{
            2\left(R_i+C_i\right)(1-\alpha)^2} + w_1,\; K_s>\sup_{\alpha\in(0,1)}\frac{\epsilon_h+\epsilon_\Delta+L_\alpha}{2\alpha^2} +w_1, \label{eqn:suff_cond}
    \end{align}
    where $\epsilon_\sigma>0, \epsilon_h>0, \epsilon_\Delta>0$ and $ w_1>0$ are constants. \label{thm:thm_1}
\end{theorem}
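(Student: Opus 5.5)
The approach is a composite Lyapunov (backstepping) argument that treats the interception error $\sigma_{AT}$, the tightened safety margin $h-\Delta$, and the acceleration tracking error $z$ together. I would take
\begin{align}
V=\tfrac12\sigma_{AT}^2+\tfrac12(h-\Delta)^2+\tfrac12 z^2 .
\end{align}
Actuator-bound satisfaction needs no Lyapunov argument: it follows from \Cref{rmk:inputsat}, because the saturation model \eqref{eqn:sat_model} keeps $|a_A|<a_{\max}$ for any bounded $a_A^c$. I would only check that the denominator $1-(a_A/a_{\max})^n$ in \eqref{eqn:comd_acc} stays positive on the forward-invariant set $|a_A|<a_{\max}$, so that $a_A^c$ is well defined.

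The first step is to write the closed-loop error dynamics with $a_A=a_A^d+z$. Using \eqref{eqn:inter_law}, the lead-angle dynamics become
\begin{align}
\dot\sigma_{AT}=-(1-\alpha)K_I\sigma_{AT}+\alpha(\cdot)+z/v_A .
\end{align}
Using \eqref{eqn:hdot} and \eqref{eqn:safe_ctrl}, the margin dynamics become
\begin{align}
\dot h-\dot\Delta=-\alpha K_s(h-\Delta)+(1-\alpha)\psi(a_A^T)-g_h z/v_A-\dot\Delta .
\end{align}
Substituting \eqref{eqn:comd_acc} into \eqref{eqn:sat_model} cancels the saturation factor and the $p_1a_A,\dot a_A^d$ terms, which gives
\begin{align}
\dot z=(\alpha-1)\sigma_{AT}/v_A+\alpha(h-\Delta)f_h/v_A-K_az .
\end{align}
The cross terms in $\dot z$ are the backstepping terms intended to cancel the $z$-coupling in $\dot V$. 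I expect an imperfect cancellation, because the $\alpha\sigma_{AT}$ and $g_h$ terms leave residues, and these residues would be dominated using Young's inequality with the constants $\epsilon_\sigma,\epsilon_h$ and the margin $w_1$.

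The second step is to bound the remaining indefinite terms. The mixing terms come from $\dot\alpha$ and from the blending, and I would bound them by $L_\alpha$ using \Cref{asm:bound_switch}. The $\dot\Delta$ term is bounded by $L_\Delta$ via \Cref{rem:der_tighten} and absorbed with $\epsilon_\Delta$. The $v_A\sin\sigma_{AT}/r_{AT}$-type terms are bounded using $r_{AT}\ge R_i+C_i$ during the blended and avoidance phases, which is why that quantity appears in the $K_I$ condition. Completing squares should give
\begin{align}
\dot V\le -\big[2(1-\alpha)^2K_I(R_i+C_i)-(2v_A+\epsilon_\sigma+L_\alpha)\big]\tfrac{\sigma_{AT}^2}{2(R_i+C_i)}-\big[2\alpha^2K_s-\epsilon_h-\epsilon_\Delta-L_\alpha\big]\tfrac{(h-\Delta)^2}{2}-K_az^2+\text{(bounded terms)} .
\end{align}
The conditions \eqref{eqn:suff_cond} then make each bracket at least $w_1$, so $\dot V\le -w_1V+c$, and the signals are uniformly ultimately bounded.

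The third step converts this into the stated conclusions. For safety, boundedness of $z$ together with \Cref{lem:safety} and \Cref{lem:Delta} gives $h-\Delta$ staying (practically) nonnegative, hence $h>0$ and $b_i>0$ by \Cref{rem:agg_safety_cond}. For interception, in Region I ($\alpha=0$) we have $\sigma_{AT}\to0$ and $z\to0$, which is pure pursuit, so $\dot r_{AT}=-v_A\cos\sigma_{AT}<0$ and $r_{AT}\to0$. I would then argue that the system cannot remain in Regions II and III forever, since the defenders are finite in number and the target lies outside their EZs.

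I expect the main obstacle to be the degenerate ends of the blending parameter: the bounds in \eqref{eqn:suff_cond} are taken over $\alpha\in(0,1)$ and blow up as $\alpha\to0$ or $\alpha\to1$. I would therefore handle the pure modes $\alpha\in\{0,1\}$ separately, each with its own single-objective Lyapunov function, and apply the composite bound only in Region III. The other delicate point is that $1/r_{AT}$ blows up near the target, which I would address by using the pure-pursuit geometry in Region I rather than the composite bound.
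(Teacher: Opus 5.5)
\textbf{Genuine gap: the unweighted Lyapunov function.} The paper's key device is
$V=\tfrac{1}{2}(1-\alpha)\sigma_{AT}^2+\tfrac{1}{2}\alpha(h-\Delta)^2+\tfrac{1}{2}z^2$.
It reduces automatically to $\tfrac{1}{2}\sigma_{AT}^2+\tfrac{1}{2}z^2$ in Region I and to $\tfrac{1}{2}(h-\Delta)^2+\tfrac{1}{2}z^2$ in Region II, so one function (continuous in time, since $\alpha$ is) is analysed in all three regions. Your $V$ fails in both pure modes:
\begin{itemize}
\item In Region I it retains $\tfrac{1}{2}(h-\Delta)^2$, but there $\dot h=\dot h(a_A^T)$ is dictated by pursuit and nothing regulates $h$ toward $\Delta$. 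Indeed $h-\Delta$ is large whenever the target is far from every EZ, so a small ultimate bound on your $V$ would contradict interception.
\item In Region II it retains $\tfrac{1}{2}\sigma_{AT}^2$, but there $\sigma_{AT}$ is undamped.
\end{itemize}
This is why you end up with three separate Lyapunov functions, and you give no argument that their bounds survive transitions between regions.

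The controller is also built for the weighted $V$. The terms $(\alpha-1)\sigma_{AT}/v_A$ and $\alpha(h-\Delta)(\cdot)/v_A$ in \eqref{eqn:comd_acc} carry exactly its weights, so the coupling $(1-\alpha)\sigma_{AT}z/v_A$ cancels identically with no condition on $K_a$. With your $V$ the residue $\alpha\sigma_{AT}z/v_A$ remains. Dominating it by Young's inequality produces a $z^2$ term, hence a lower bound on $K_a$ that is not in the theorem, although your display keeps the full $-K_az^2$. Your suspicion about the $h$-channel is justified: its coupling involves $g_h$, which the $f_h$ term in \eqref{eqn:comd_acc} does not match, and the paper's proof passes over this by writing $f_h$ there.

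Likewise, the factors $(1-\alpha)^2$ and $\alpha^2$ in \eqref{eqn:suff_cond} are weight times blending coefficient. The constant $L_\alpha$ enters only through the term $\tfrac{\dot\alpha}{2}\big((h-\Delta)^2-\sigma_{AT}^2\big)$ created by differentiating the weights and bounded via Assumption~\ref{asm:bound_switch}. Your $V$ has no explicit $\alpha$, and the only $\dot\alpha$ (inside $\dot a_A^d$) is cancelled by \eqref{eqn:comd_acc}. So your computation gives $-(1-\alpha)K_I\sigma_{AT}^2$ and no $\dot\alpha$ term at all. The completed-square inequality you display is essentially the paper's Region III bound from the weighted $V$, not something your $V$ produces.

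\textbf{Your third step also does not deliver the stated conclusions.}

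\emph{Safety.} Your margin equation drops a term. Since $\psi$ is affine and $\psi(a_A^b)=0$, we have $\dot h(a_A^d)=-K_s(h-\Delta)+(1-\alpha)\psi(a_A^T)$. Hence $\tfrac{d}{dt}(h-\Delta)=-K_s(h-\Delta)+(1-\alpha)\psi(a_A^T)-g_hz/v_A-\dot\Delta$, with $(1-\alpha)\psi(a_A^T)\ge 0$ everywhere by \eqref{eqn:switch}. A comparison argument on this scalar equation is the natural route to safety and needs no $\alpha$-dependent gain. But it yields only $h-\Delta\ge-\delta$, with $\delta$ of order $(L_\Delta+\sup|g_hz|/v_A)/K_s$. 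Since $\Delta$ may vanish, ``practically nonnegative $h-\Delta$, hence $h>0$'' is a non sequitur; the paper's proof itself concludes only practical safety. Lemma~\ref{lem:Delta} cannot repair this, because it covers a single window of length $\tau_{\mathrm{resp}}$ starting from $h>\Delta$ and cannot be chained.

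\emph{Endpoint blow-up.} Excluding $\alpha\in\{0,1\}$ does not remove the blow-up you identified. The suprema in \eqref{eqn:suff_cond} are over the open interval and are already infinite, and Region III contains $\alpha$ arbitrarily close to $0$ and $1$.

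\emph{Interception.} That the state cannot remain in Regions II and III forever is the crux of interception, yet it is only asserted. Finitely many defenders does not rule out deadlock of the blended law, or sustained evasion from defenders pursuing via \eqref{eqn:def_ctrl}. No hypothesis places the target outside the EZs.

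\emph{Actuator bound.} This does not follow from Remark~\ref{rmk:inputsat}, which presupposes a bounded $a_A^c$. Since \eqref{eqn:comd_acc} divides by $1-(a_A/a_{\max})^2$, the saturation factor cancels in closed loop, and
$\dot a_A=\dot a_A^d+(\alpha-1)\sigma_{AT}/v_A+\alpha(h-\Delta)f_h/v_A-K_az$
has no restoring term as $|a_A|\to a_{\max}$. One must show $|a_A^d+z|<a_{\max}$, so checking the denominator on a set assumed invariant is circular.

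On these last three points the paper's proof offers no argument you could borrow either, so a complete proof needs genuinely new input there.
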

\begin{proof}
    Consider the candidate for the Lyapunov function
    \begin{align}
        V_1=\frac{1}{2}\left(1-\alpha\right)\sigma_{AT}^2 + \frac{1}{2}\alpha \left(h-\Delta\left(t\right)\right)^2 .
    \end{align}
    Differentiating $V_1$ with respect to time and using \eqref{eqn:bdot} and \eqref{eqn:hdot}, we obtain,
    \begin{align}
        \dot{V}_1 =& (1-\alpha)\sigma_{AT}\dot{\sigma}_{AT} + \alpha \left(h-\Delta\right)\left(\dot{h}-\dot{\Delta}(t)\right) + \frac{\dot{\alpha}}{2}\left(\left(h-\Delta(t)\right)^2-\sigma_{AT}^2\right) \nonumber 
         \\
        =&\left(1-\alpha\right)\sigma_{AT}\left(\frac{a_A^d}{v_A}+\frac{v_A}{r_{AT}}\sin\sigma_{AT}\right) +\alpha\left(h-\Delta\right)\left(f_h + g_h \frac{a_A^d}{v_A}\right) +\left((1-\alpha)\frac{\sigma_{AT}}{v_A} +\alpha \left(h-\Delta(t)\right)\frac{f_h}{v_A}\right) z \nonumber\\&- \alpha \left(h-\Delta\right)\dot{\Delta}(t) + \frac{\dot{\alpha}}{2}\left(\left(h-\Delta(t)\right)^2-\sigma_{AT}^2\right)
    \end{align}
    using $a_A=z + a_A^d$ from \eqref{eqn:pseudi_ctrl}. On substituting the desired lateral acceleration \eqref{eqn:blend_ctrl} in the above equation, we obtain 
    \begin{align}
        \dot{V}_1 =& -(1-\alpha)^2K_I\sigma_{AT}^2 -\alpha^2K_s\left(h-\Delta(t)\right)^2 +\left((1-\alpha)\frac{\sigma_{AT}}{v_A} +\alpha \left(h-\Delta(t)\right)\frac{f_h}{v_A}\right) z \nonumber\\&+\alpha(1-\alpha)\Bigg[\sigma_{AT}\left(a_A^b+\frac{v_A}{r_{AT}}\sin\sigma_{AT}\right)+\left(h-\Delta\right)\left(f_h+\frac{g_h}{v_A}a_A^T\right) \Bigg]- \alpha \left(h-\Delta\right)\dot{\Delta}(t) \nonumber \\&+ \frac{\dot{\alpha}}{2}\left(\left(h-\Delta(t)\right)^2-\sigma_{AT}^2\right). \label{eqn:dot_v_1}
    \end{align}
     To explicitly account for the actuator dynamics introduced by the saturation model in \eqref{eqn:sat_model}, we augment the Lyapunov function with an additional term that captures the acceleration tracking error, $V=V_1+V_2$, where $V_2 =z^2/2$. On differentiating $V$ with respect to time and using \eqref{eqn:sat_model}, \eqref{eqn:dot_v_1}, we obtain
    \begin{align}
        \dot{V}=& \dot{V}_1 + \dot{V}_2 = \dot{V_1}+z\dot{z}=\dot{V_1}+z\left(\dot{a}_A-\dot{a}_A^d\right) \nonumber\\ =&-(1-\alpha)^2K_1\sigma_{AT}^2 -\alpha^2K_s\left(h-\Delta(t)\right)^2 +\left((1-\alpha)\frac{\sigma_{AT}}{v_A} +\alpha \left(h-\Delta(t)\right)\frac{f_h}{v_A}\right) z \nonumber\\&+\alpha(1-\alpha)\Bigg[\sigma_{AT}\left(a_A^b+\frac{v_A}{r_{AT}}\sin\sigma_{AT}\right)+\left(h-\Delta\right)\left(f_h+\frac{g_h}{v_A}a_A^T\right) \Bigg]- \alpha \left(h-\Delta\right)\dot{\Delta}(t) \nonumber \\&+ \frac{\dot{\alpha}}{2}\left(\left(h-\Delta(t)\right)^2-\sigma_{AT}^2\right)+
      \Bigg[\left[1-\left(\frac{a_A}{a_{\max}}\right)^n\right]a_A^c - p_1 a_A -\dot{a}_A^d\Bigg] z .
    \end{align}
    Choosing the commanded acceleration proposed in \eqref{eqn:comd_acc} renders the derivative of the Lyapunov function candidate as
    \begin{align}
        \dot{V}=&-(1-\alpha)^2K_I\sigma_{AT}^2 -\alpha^2K_s\left(h-\Delta(t)\right)^2 -K_az^2 + \frac{\dot{\alpha}}{2}\left(\left(h-\Delta(t)\right)^2-\sigma_{AT}^2\right)- \alpha \left(h-\Delta\right)\dot{\Delta}(t)\nonumber\\&+\alpha(1-\alpha)\Bigg[\sigma_{AT}\left(a_A^b+\frac{v_A}{r_{AT}}\sin\sigma_{AT}\right)+\left(h-\Delta\right)\left(f_h+\frac{g_h}{v_A}a_A^T\right) \Bigg]. \label{eqn:im_dotv}
    \end{align}
    We analyze the safety guarantees and stability of the proposed guidance law \eqref{eqn:comd_acc} by partitioning the state-space into three distinct regions according to the switching function in \eqref{eqn:switch}. 
    In \textbf{Region I (pure target interception mode)}, $\alpha=0$ and $\psi(a_A^T)\geq\epsilon$, which implies the target interception term $a_A^T$ is purely activated in $a_A^d$. Under such a condition $\dot{\alpha}=0$, $\alpha\left(1-\alpha\right)=0$ and $\dot{\Delta}=0$ renders the Lyapunov derivative as
    \begin{align}
        \dot{V} = -K_I\sigma_{AT}^2 - K_a z^2 <0, \;\forall \;(\sigma_{AT}, z) \in \mathbb{R}^2 \setminus(0,0),
    \end{align}
    if $K_1>0$ and $K_a>0$. This implies that both $\sigma_{AT}$ and $z$ converge asymptotically to zero. The term $\sigma_{AT} \to 0$ implies the attacker aligns its heading along the LOS to the target, resulting in $\dot{r}_{AT}=-v_A<0$, and guaranteeing monotonic range reduction similar to pure-pursuit guidance. In addition, $z \to 0$ results in the attacker perfectly tracking the desired acceleration, that is, $a_A=a_A^d=a_A^T$. 
    
    In \textbf{Region II (pure EZ avoidance mode)}, $\psi(a_A^T)\leq0$ and $\alpha=1$, and only the safety component $a_A^b$ is fully active in the desired acceleration $a_A^d$. Since under such conditions $\dot{\alpha}=0$ and $\alpha\left(1-\alpha\right)=0$, the Lyapunov derivative in \eqref{eqn:im_dotv} simplifies to
    \begin{align}
        \dot{V} = -K_s\left(h-\Delta\left(t\right)\right)^2 - K_a z^2-\left(h-\Delta\right)\dot{\Delta}(t)\leq-\frac{K_s}{2}\left(h-\Delta\right)^2 - K_az^2 + \frac{L_\Delta^2}{2K_s}
    \end{align}
    following \Cref{rem:der_tighten}. It follows from the above expression that if $K_s>0$ and $K_a>0$, then decrement of $V$ is guaranteed outside the compact set $$\Omega_{II}\coloneqq\left\{(h-\Delta, z)\in \mathbb{R}^2\middle|\dfrac{K_s}{2}\left(h-\Delta\right)^2+K_az^2\leq\dfrac{L_\Delta^2}{2K_s}\right\}.$$ Therefore, $h-\Delta$ and $z$ are uniformly ultimately bounded in {Region II} with the ultimate performance bounds given as $|h-\Delta|\leq\dfrac{L_\Delta}{K_s}$ and $|z|\leq\sqrt{\dfrac{L_\Delta^2}{2K_2K_a}}$. This results in $h\to\Delta+\epsilon_1$ and $z\to \epsilon_2$, where the residual errors $\epsilon_1$ and $\epsilon_2$ are bounded by the ultimate performance limits as derived above and can be made arbitrarily small by a suitable choice of the design parameters. This implies that in {Region II}, the guidance law prioritizes safety and steers the attacker toward a small neighborhood of the tightened safe set. Similarly, the attacker's lateral acceleration remains within a small neighborhood of the desired acceleration, that is, $a_A\to a_A^d +\epsilon_2$.

    In \textbf{Region III (blended mode)}, $0<\psi(a_A^T)<\epsilon$, $0<\alpha<1$, such that a convex combination of $a_A^b$ and $a_A^T$ is active in $a_A^d$, which on following  \Cref{rem:der_tighten} and \Cref{asm:bound_switch}, simplifies the Lyapunov derivative in \eqref{eqn:im_dotv} to
     \begin{align}
       \dot{V}_2 \leq & -\left[(1-\alpha)^2K_I-\left(\frac{v_A}{r_{AT}}+ \frac{\epsilon_\sigma}{2}\right)-\frac{L_\alpha}{2}\right]\sigma_{AT}^2 -\left[\alpha^2K_s - \frac{\epsilon_h +\epsilon_\Delta + L_\alpha}{2}\right]\left(h-\Delta(t)\right)^2 -K_az^2 \nonumber\\&+ \frac{1}{2\epsilon_\sigma}|a_A^b|^2 + \frac{1}{2\epsilon_h}|\phi|^2 + \frac{1}{2\epsilon_\Delta}L_\Delta,
       \end{align}
      where $\phi=f_h+\dfrac{g_h}{v_A}a_A^T$. The above expression provides the sufficient condition as presented in \eqref{eqn:suff_cond} on the gains (after substituting $r_{AT}\geq R_i+c_i$, since in Region III, interception cannot occur and therefore $r_{AT}$ is bounded away from zero) to ensure that the first two terms in the above expression are always negative definite, yielding
    \begin{align}
        \dot{V} = & -w_1\sigma_{AT}^2-w_1\left(h-\Delta\right)^2 -K_az^2 + \overline{d},
    \end{align}
    where $\overline{d}= \dfrac{1}{2\epsilon_\sigma}|a_A^b|^2 + \dfrac{1}{2\epsilon_h}|\phi|^2 + \dfrac{1}{2\epsilon_\Delta}|\dot{\Delta}|$ denotes the upper bound on the residual terms. It follows from the above expression that the Lyapunov function candidate $V$ decreases outside the compact set $$\Omega_{III}\coloneqq\Big\{(\sigma_{AT}, h-\Delta,z)\in \mathbb{R}^3\big|w_1\sigma_{AT}^2 +w_2\left(h-\Delta\right)^2 + K_az^2\leq\overline{d}\Big\},$$ with ultimate performance bounds, $|\sigma_{AT}|, |h-\Delta|\leq \sqrt{\dfrac{\overline{d}}{w_1}}$ and $|z|\leq \sqrt{\dfrac{\overline{d}}{K_a}}$. This implies that in {Region III}, $\sigma_{AT}\to \epsilon_3$, $h\to \Delta +\epsilon_4$ and $z\to\epsilon_5$, where $\epsilon_3, \epsilon_4$ and $\epsilon_5$ are positive constants. This results in the attacker maintaining a bounded LOS misalignment $\sigma_{AT} \not\to 0$ within a bounded neighborhood of the tightened safety boundary, while the attacker's lateral acceleration tracks the desired acceleration up to a bounded residual error $a_A\to a_A^d +\epsilon_5$. 

    Combining the analyses across the three regions establishes the desired safety and convergence properties. In Region I, the closed-loop dynamics are asymptotically stable to guarantee target interception. In Regions II and III, the Lyapunov derivative is negative outside compact sets, implying uniform ultimate boundedness of $\sigma_{AT}, h-\Delta(t)$ and $z$. As a consequence, the attacker remains within a bounded neighborhood of the tightened safety boundary while never penetrating the defender-induced EZs, and ensuring that its lateral acceleration tracks the desired values up to a bounded residual error. Moreover, the smooth saturation dynamics in \eqref{eqn:sat_model} ensure that the actual control input $a_A$ remains within the admissible limits for all time. Therefore, under the gain conditions in \eqref{eqn:suff_cond}, the proposed guidance law guarantees target interception, practical safety from the defenders, while respecting the constraints on the attacker's control inputs. This completes the proof.      
\end{proof}

\begin{remark}
    The ultimate bounds derived in Regions II and III can be made smaller by choosing high values of the gains $K_1$, $K_S$, and $K_a$, improving the uniform ultimate boundedness  performance and shrinking the invariant compact sets. Particularly, increasing $K_s$ reduces the steady-state deviation of $h-\Delta$ from zero, and larger $K_a$ decreases the residual acceleration tracking error $z$. However, excessively large gains may amplify high-amplitude oscillations, leading to degraded performance. Therefore, the gains must be selected to balance tight ultimate bounds with acceptable transient smoothness and actuator limitations.
\end{remark}
\begin{remark}
    From \Cref{thm:thm_1}, it is evident that both the actual lateral acceleration $a_A$ and the acceleration tracking error $z$ are uniformly ultimately bounded. Since $a_A^d=a_A+z$, it follows that the desired lateral acceleration $a_A^d$ is also bounded. As a consequence, the desired lateral acceleration remains physically realizable, and the derivative $\dot{a}_A^d$ associated with the proposed control in \eqref{eqn:comd_acc} remains bounded. This ensures that all internal signals in the backstepping-based command law are well-defined and prevents amplification of high-frequency dynamics in the closed-loop system.
\end{remark}
\begin{remark}
    It is important to emphasize that both the desired acceleration \eqref{eqn:blend_ctrl} and the commanded acceleration \eqref{eqn:comd_acc} rely only on relative information and do not require the information of control input of other agents. Compared to prior approaches, the proposed method yields a fully distributed, scalable method for practical multi-agent engagement scenarios, robust to communication-induced anomalies or adversarial information constraints, even when inter-agent communication is limited or unreliable. Furthermore, the independence from adversarial control inputs ensures robustness to changes in defender strategies (as well as to the number of defenders), eliminating the need to retune or restructure the attacker’s guidance law.
\end{remark}

\section{Maximum Engagement-Range-based Input-Constrained Intercept Guidance Design}
In this section, we design the guidance law for the attacker that enforces safety through a fixed stand-off distance equal to the defender's maximum engagement range constraint $R_i+c_i$  while ensuring prescribed control input bounds. Such a design provides a baseline for comparison with the proposed EZ-based safety constraints. In contrast to prior works \cite{doi:10.2514/1.37030,doi:10.2514/1.G003157,10839025,doi:10.2514/1.G003223,doi:10.2514/6.2026-0121} that employ similar range-based safety constraints but neglect actuator limits, the baseline design explicitly incorporates input bounds to enable a fair comparison. 

For notational convenience, we adopt the same variables used in the EZ-based formulation and denote them with the superscript $(\cdot)^{MR}$ to represent the quantities associated with the maximum engagement-range-based formulation. The goal here is to design the attacker's guidance strategy to satisfy \eqref{eqn:cns_saf_cond} such that it maintains a position outside the defender's maximum range. Under such considerations, the conservative safety and the aggregated conservative safety function are given by
\begin{align}
b_i^{MR}=r_{Ai}-R_i-c_i, \;\;h^{MR}= -\beta\log\left(\sum_{i \in\mathcal{D}}e^{-{b_t}^{MR}/\beta}\right), \;\forall\; i \in \mathcal{D},   
\end{align} 
where $\beta>0$ is a constant. 

From the above safety measure, $b_i^{MR}\geq0$  represents the scenarios when the attacker is outside the $i$\textsuperscript{th} defender's maximum engagement range, that is, $r_{AD_i}>R_i+c_i$. In comparison, $b_i^{MR}\leq 0$ indicates scenarios when the attacker is inside the $i$\textsuperscript{th} defender's maximum engagement range. Additionally, similar to \Cref{rem:agg_safety_cond}, $h^{MR}$ aggregates the safety measure from all the defenders using a log-sum-exp operator (soft-min) function, with the parameter $\beta$ governing how strongly the aggregated function emphasizes the most critical defender. Therefore, we can define the safe set that encompasses the region outside the maximum engagement range of all the defenders as
\begin{align}
    \mathcal{C}_h^{MR} \coloneqq \left\{(x_A, y_A, \gamma_A) \in \mathbb{R}^2 \times [0,2\pi) \;\big|~ h^{MR}>0\right\}.
\end{align}
Thus, remaining outside the above set, that is, $h^{MR}>0$, guarantees that $b_i^{MR}>0,\forall\ i\in\mathcal{D}$, using inferences from \Cref{rem:agg_safety_cond}.

On differentiating $h^{MR}$ with respect to time and using \eqref{eqn:rel_dyn_1}, we obtain the dynamics of the aggregate conservative safety measure as
\begin{align}
     \dot{h}^{MR} &= \sum_{i\in \mathcal{D}}w_i^{MR}\dot{r}_{Ai} =  \sum_{i\in \mathcal{D}}w_i^{MR}\left(v_i\cos\sigma_{iA}-v_A\cos\sigma_{Ai}\right)\label{eq:hdotmr}
\end{align}
where $w_i^{MR}=\dfrac{e^{-b_i^{MR}/\beta}}{\sum_{i\in\mathcal{D}}e^{-b_i^{MR}/\beta}}$ denotes the normalized soft-min weight associated with each defender. One can observe from \eqref{eq:hdotmr} that the first derivative of the conservative safety function depends only on the relative velocity between the attacker and the defenders. Since the attacker’s lateral acceleration does not appear explicitly in $\dot{h}^{MR}$, we further differentiate $\dot{h}^{MR}$ with respect to time to obtain
\begin{align}
     \ddot{h}^{MR}&=\underbrace{\frac{1}{\beta}
   \left(
      (\dot{h}^{MR})^{2}
      - \sum_{i\in\mathcal{D}} w_i^{MR} (\dot{b}_i^{MR})^{2}
   \right)
   + \sum_{i\in\mathcal{D}} w_i^{MR} \left(-a_i\sin\sigma_{iA} + r_{Ai}\dot{\theta}_{Ai}^2\right)}_{f_h^{MR}} +\underbrace{\left(\sum_{i\in\mathcal{D}} w_i\sin\sigma_{Ai} \right)}_{g_h^{MR}}a_A, \label{eqn:ddot_h_MR}
\end{align}
using \eqref{eqn:rel_dyn_1} and \eqref{eqn:rel_dyn_2}. The drift and input coupling terms in \eqref{eqn:ddot_h_MR} are shown in underbraces. It is important to observe the above expression that the attacker's lateral acceleration has a relative degree of two with respect to the conservative safety constraint $h^{MR}$.

Ensuring forward invariance of the set $\mathcal{C}_h^{MR}$ ensures safety by forcing the attacker to remain outside the maximum engagement range of all defenders. Since $h^{MR}$ has a relative degree of two, we introduce an auxiliary function to enable the higher-order control barrier function-based design, given as
\begin{align}
\psi_1^{MR}(x) = \dot{h}^{MR}(x) + \alpha_1\!\left(h^{MR}(x)\right) = \dot{h}^{MR}(x) + K_1^{MR}h^{MR} , \label{eqn:hocbf_cnd}
\end{align}
where $\alpha_1(\cdot)$ is an extended class-$\mathcal{K}$ function. In this work, we utilize a linear class-$\mathcal{K}$ function  of the form $\alpha_1\left(h^{MR}\right)=K_1^{MR}h^{MR}$, where $K_1^{MR}>0$ . The auxiliary function \eqref{eqn:hocbf_cnd} incorporates both the conservative safety function and its derivative to enable the enforcement of the safety constraints for relative-degree-two systems. The admissible set corresponding to the auxiliary constraint is defined as
\begin{align}
\mathcal{C}_1^{MR} = \left\{ x \in \mathbb{R}^n \mid \psi_1^{MR}(x) \ge 0 \right\}.
\end{align}
It follows that enforcing the forward invariance of the sets $\mathcal{C}_1^{MR}$ will ensure the forward invariance of the set $\mathcal{C}_h^{MR}$, thus guaranteeing that the attacker maintains a conservative safety constraint throughout its maneuver. 

To retain implementability under input constraints, we introduce a tightening parameter $\Delta^{MR}>0$ and ensure safety by satisfying $\psi_1^{MR}>\Delta^{MR}$ instead of $\psi_1^{MR}>0$. This tightening contracts the nominal safe set and introduces a safety buffer that preserves forward invariance of the conservative safety constraint despite bounded control authority. This tightening parameter is given as
\begin{align}
\Delta^{MR}(t)
\coloneqq
\max\left\{0,\frac{1}{p_1}
\left[
|g_h^{MR}|\,a_{\max}
-
f_h^{MR}
-
K_1^{MR}\dot{h}^{MR}
\right]\right\},
\label{eq:Delta_MR}
\end{align}
where $f_h,g_h$ are defined in \eqref{eqn:ddot_h_MR}. As a consequence, we can define the tightened conservative safe set as
\begin{align}
    \mathcal{C}_\Delta^{MR} = \left\{ x \in \mathbb{R}^n \mid \psi_1^{MR}(x) \ge \Delta^{MR} \right\}.
\end{align}
Hence, ensuring forward invariance to the set $\mathcal{C}_\Delta^{MR}$ will ensure attacker safety at all times, since, by construction, $ \mathcal{C}_\Delta^{MR}\subset \mathcal{C}_1^{MR}\subset \mathcal{C}_h^{MR}\subset \mathcal{E}^{MR}$. 
\begin{remark}
    Similar to \Cref{lem:Delta}, we can show that the tightening parameter \eqref{eq:Delta_MR}  is designed to counteract the worst-case decrease of the auxiliary barrier function $\psi_1^{MR}$ over the actuator response interval. In particular, the term $|g_h^{MR}|a_{\max}$ captures the largest possible adverse contribution of the bounded lateral acceleration to the safety dynamics, whereas the terms $f_h^{MR}$ and $K_1^{MR}\dot{h}^{MR}$ arise from the safety function dynamics in \eqref{eqn:ddot_h_MR} and \eqref{eqn:hocbf_cnd}. By scaling this worst-case decrease over the actuator response interval $1/p_1$, the tightening parameter $\Delta^{MR}(t)$ provides the minimum safety buffer required to maintain feasibility of the safety constraint despite bounded control authority
\end{remark}
We now introduce a lateral acceleration component for the attacker that enforces safety under the conservative formulation, given by
\begin{align}
    {a_A^b}^{MR} = \frac{
       \sum_{i\in\mathcal{D}} w_i^{MR} \left(\frac{(\dot{r}_{Ai})^{\,2}}{\beta}+a_i\sin\sigma_{iA} - r_{Ai}\dot{\theta}_{Ai}^2\right)- \frac{(\dot{h}^{MR})^{2}}{\beta}       -K_1^{MR}\dot{h}^{MR}-K_2^{MR}\left({\psi_1}^{MR}-\Delta^{MR}\right)}{\sum_{i\in\mathcal{D}} w_i^{MR}\sin\sigma_{Ai}}, \label{eqn:ab_mr}
\end{align}
where $K_1^{MR}>0$ and $K_2^{MR}>0$ are controller gains. The above lateral acceleration component \eqref{eqn:ab_mr} is obtained by imposing the higher-order control barrier function condition,
\begin{align}
    \dot{\psi}_1^{MR}\left(x\right) + \alpha_{2}\left( {\psi}_1^{MR}-\Delta^{MR}\right) \geq0, \label{eqn:safety_conser}
\end{align}
which ensures that the set $\mathcal{C}_1^{MR}$ is forward-invariant, allowing the attacker to remain outside the defenders' maximum range.

The target-interception component remains unchanged from the EZ-based safety formulation in \Cref{sec:our_app}. Using \eqref{eqn:inter_law} and \eqref{eqn:ab_mr}, the overall desired lateral acceleration is constructed as a convex combination of the target-interception and maximum-range avoidance components, given by
\begin{align}
    {a_A^d}^{MR}=\alpha^{MR}{a_A^b}^{MR} + \left(1-\alpha\right)a_A^T, \label{eqn:des_ctrl_cons}
\end{align}
where $\alpha^{MR}$ denotes a scalar continuous and monotonically decreasing switching function 
\begin{align}
    \alpha(\psi)^{MR} = \begin{cases}
        1, &\text{if}\;\psi^{MR}(a_A^T)\leq0,\\
        0, &\text{if}\;\psi^{MR}(a_A^T)\geq\varepsilon
    \end{cases}. \label{eqn:switch_mr}
\end{align}

\begin{remark}Similar to \Cref{lem:safety}, it can be shown that the above switching condition leads to the boundary
\begin{align}
    \psi(a_A^T)^{MR} = \frac{\dot{h}_c^{\,2}}{\beta} + \sum_{i\in\mathcal{D}} w_i \left(-\frac{\dot{r}_{Ai}^{\,2}}{\beta}-a_i\sin\sigma_{iA} + r_{Ai}\dot{\theta}_{Ai}^2+ \sin\sigma_{Ai} \right)a_A^T+K_1^{MR}\dot{h}^{MR}+K_2^{MR}\left({\psi_1}^{MR}-\Delta_{MR}\right) =0, \label{eqn:cons_cond}
\end{align}
such that ensuring $\psi(a_A^T)^{MR}\geq0$ is equivalent to ensuring safety, that is, $\psi(a_A^d)\geq0$. This boundary characterizes the influence of the relative attacker-defender motion on the behavior of the conservative safety function. In particular, the first three terms in the above boundary expression denote $\dot{h}(a_A^T)^{MR}$, while the last two terms are positive constants to ensure the control barrier function condition \eqref{eqn:safety_conser}.
\end{remark}
To account for input constraints using the saturation model in \eqref{eqn:sat_model}, we define the tracking error between the actual and the desired acceleration as
\begin{align}
    z^{MR} = a_A^{MR}-{a_A^d}^{MR},
\end{align}
which, in a way, represents the mismatch between the inner and outer control loops.
Accordingly, the commanded lateral acceleration for the attacker is designed to achieve target interception while ensuring that it remains outside the defenders’ maximum engagement range, and is given by
\begin{align}
    {a_A^c}^{MR}
    =
    \frac{
    p_1 a_A^{MR}
    + \left(\dot{a}_A^{d}\right)^{MR}
    + \left(1-\alpha^{MR}\right)\dfrac{\sigma_{AT}}{v_A}
    + \alpha^{MR}\psi_1^{MR}g_h^{MR}
    - K_a^{MR}\left(z^{MR}\right)
    }{
    1-\left(\dfrac{a_A^{MR}}{a_{\max}}\right)^n
    }, \label{eqn:ctrl_cmd_mr}
\end{align}
where $K_a^{MR}>0$ is a gain and $\Delta^{MR}(t)$ denotes the time-varying tightening term introduced to compensate for the transient effect of the bounded actuator dynamics. The first two terms in \eqref{eqn:ctrl_cmd_mr} compensate for the actuator dynamics and the time variation of the desired acceleration. The next two terms in \eqref{eqn:ctrl_cmd_mr} arise from the recursive backstepping design corresponding to the interception and conservative safety objectives. The last term in \eqref{eqn:ctrl_cmd_mr} is a robust corrective action that drives the acceleration tracking error to zero.

\begin{theorem}
    Consider the target–attacker–defender relative kinematics \eqref{eqn:rel_dyn_1}–\eqref{eqn:rel_dyn_2} with the input saturation model \eqref{eqn:sat_model}. Suppose the attacker’s commanded lateral acceleration is given by \eqref{eqn:ctrl_cmd_mr}, with the resulting actual lateral acceleration governed by \eqref{eqn:des_ctrl_cons}. Then the attacker intercepts the target while remaining outside the defenders' conservative maximum range boundaries and respecting the actuator bounds, provided the controller gains satisfy
     \begin{align}
     K_I>&\sup_{\alpha^{MR}\in(0,1)}\frac{2v_A+\epsilon_\sigma^{MR} + L_\alpha^{MR}}{
            2\left(R_i+c_i\right)(1-\alpha^{MR})^2} + w_2,\\ K_2^{MR}>&\sup_{\alpha^{MR}\in(0,1)}\frac{\epsilon_h^{MR}+\epsilon_\Delta^{MR}+L_\alpha^{MR}}{2\left(\alpha^{MR}\right)^2} +w_2, \label{eqn:suff_cond_MR}
    \end{align}
    $K_1^{MR}>0$ and $K_a>0$, where $\epsilon_\sigma^{MR}>0, \epsilon_h^{MR}>0, \epsilon_\Delta^{MR}>0$ and $ w_2>0$ are constants. \label{thm:thm_2}
\end{theorem}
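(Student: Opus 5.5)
The argument will follow the template of the proof of \Cref{thm:thm_1}, with the first-order barrier $h$ replaced by the auxiliary higher-order barrier $\psi_1^{MR}$ from \eqref{eqn:hocbf_cnd}. We take the composite Lyapunov candidate
\begin{align}
V^{MR} = \frac{1}{2}\left(1-\alpha^{MR}\right)\sigma_{AT}^2 + \frac{1}{2}\alpha^{MR}\left(\psi_1^{MR}-\Delta^{MR}\right)^2 + \frac{1}{2}\left(z^{MR}\right)^2 .
\end{align}
Differentiating requires three ingredients. For the interception term we use $\dot\sigma_{AT}=a_A/v_A + (v_A/r_{AT})\sin\sigma_{AT}$. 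For the barrier term we use $\dot\psi_1^{MR} = f_h^{MR} + g_h^{MR}a_A + K_1^{MR}\dot h^{MR}$, obtained from \eqref{eqn:ddot_h_MR}. For the tracking term we use $\dot z^{MR}=\dot a_A-\dot a_A^{d,MR}$ together with \eqref{eqn:sat_model}. In each of these we substitute $a_A = z^{MR} + a_A^{d,MR}$.

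Inserting the blended law \eqref{eqn:des_ctrl_cons} and using that ${a_A^b}^{MR}$ in \eqref{eqn:ab_mr} enforces $\dot\psi_1^{MR} = -K_2^{MR}(\psi_1^{MR}-\Delta^{MR})$ yields the diagonal terms $-(1-\alpha^{MR})^2K_I\sigma_{AT}^2 - (\alpha^{MR})^2K_2^{MR}(\psi_1^{MR}-\Delta^{MR})^2$. The remaining pieces are:
\begin{itemize}
\item cross terms proportional to $\alpha^{MR}(1-\alpha^{MR})$;
\item the switching term $\tfrac{\dot\alpha^{MR}}{2}\left[(\psi_1^{MR}-\Delta^{MR})^2-\sigma_{AT}^2\right]$;
\item the drift term $-\alpha^{MR}(\psi_1^{MR}-\Delta^{MR})\dot\Delta^{MR}$;
\item a $z^{MR}$-coupling term $\left[(1-\alpha^{MR})\sigma_{AT}/v_A + \alpha^{MR}(\psi_1^{MR}-\Delta^{MR})g_h^{MR}\right]z^{MR}$.
\end{itemize}
The backstepping terms in the commanded law \eqref{eqn:ctrl_cmd_mr} are chosen to cancel this last coupling. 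After substitution, the saturation factor in the denominator cancels the factor $1-(a_A/a_{\max})^n$, leaving $-K_a^{MR}(z^{MR})^2$. We will assume, as in \Cref{rem:der_tighten} and \Cref{asm:bound_switch}, that $|\dot\Delta^{MR}|\le L_\Delta^{MR}$ and $|\dot\alpha^{MR}|\le L_\alpha^{MR}$. Both are justified by smoothness of $f_h^{MR}$ and $g_h^{MR}$ away from $g_h^{MR}=0$.

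Next we split into the three regions of \Cref{tb:regions}.
\begin{itemize}
\item \textbf{Region I.} Here $\alpha^{MR}=0$, so $\dot V^{MR}=-K_I\sigma_{AT}^2-K_a^{MR}(z^{MR})^2$. This gives $\sigma_{AT}\to 0$, hence $\dot r_{AT}\to -v_A$ and interception. Since $\psi^{MR}(a_A^T)\ge\varepsilon>0$ in this region, the tightened condition \eqref{eqn:safety_conser} is automatically satisfied.
\item \textbf{Region II.} Here $\alpha^{MR}=1$. Young's inequality on the $\dot\Delta^{MR}$ term gives uniform ultimate boundedness of $(\psi_1^{MR}-\Delta^{MR},z^{MR})$.
\item \textbf{Region III.} We bound each cross term with Young's inequality using the weights $\epsilon_\sigma^{MR},\epsilon_h^{MR},\epsilon_\Delta^{MR}$, and use $r_{AT}\ge R_i+c_i$ to bound $v_A/r_{AT}$. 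The gain conditions \eqref{eqn:suff_cond_MR} then give $\dot V^{MR}\le -w_2\sigma_{AT}^2 - w_2(\psi_1^{MR}-\Delta^{MR})^2 - K_a^{MR}(z^{MR})^2+\bar d^{MR}$, and hence uniform ultimate boundedness.
\end{itemize}

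Safety is then transferred down the higher-order chain. First, $\psi_1^{MR}\ge\Delta^{MR}\ge 0$ holds up to a small residual. With $K_1^{MR}>0$ this gives $\dot h^{MR}\ge -K_1^{MR}h^{MR}$, so by the comparison lemma $h^{MR}(t)\ge h^{MR}(0)e^{-K_1^{MR}t}>0$. Next, \Cref{rem:agg_safety_cond} gives $b_i^{MR}>0$ for every defender, i.e. the attacker stays outside every maximum range. The analogue of \Cref{lem:Delta} (the remark following \eqref{eq:Delta_MR}) supplies the buffer that covers the actuator response window. Actuator bounds follow from \Cref{rmk:inputsat}.

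\textbf{Main obstacle.} The hardest step is Region III, for two reasons. First, the residual $\bar d^{MR}$ contains $|{a_A^b}^{MR}|^2$, whose denominator $\sum_i w_i^{MR}\sin\sigma_{Ai}$ can vanish. I would first try to show that this residual stays bounded by restricting to trajectories where $|g_h^{MR}|$ is bounded below, which is where the blended mode is active. Second, residual-level (practical) safety only keeps $\psi_1^{MR}$ within $\sqrt{\bar d^{MR}/w_2}$ of $\Delta^{MR}$. Strict safety therefore needs $\Delta^{MR}$ to dominate that residual, and I would argue this through the buffer lemma.
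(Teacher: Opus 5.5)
Your route is essentially the paper's: the same composite Lyapunov function, the same three-region split, Young's inequality in Region III, and uniform ultimate boundedness under \eqref{eqn:suff_cond_MR}. The step where you go beyond the paper, the strict safety transfer, does not go through, and your proposed repair would not fix it.

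The comparison-lemma bound $h^{MR}(t)\ge h^{MR}(0)e^{-K_1^{MR}t}$ needs $\psi_1^{MR}(t)\ge0$ at every instant. Uniform ultimate boundedness cannot give you that. It bounds $\psi_1^{MR}-\Delta^{MR}$ only after a transient, only up to a residual of either sign, and not at all where $\alpha^{MR}=0$ (there $V^{MR}$ carries no information about $\psi_1^{MR}$). The right tool is the barrier inequality itself. Since $\psi^{MR}$ is affine in $a_A$ and ${a_A^b}^{MR}$ zeroes it, $\psi^{MR}({a_A^d}^{MR})=(1-\alpha^{MR})\psi^{MR}(a_A^T)\ge0$ in every region. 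Hence, for the applied input $a_A={a_A^d}^{MR}+z^{MR}$, one only has $\dot\psi_1^{MR}\ge-K_2^{MR}\psi_1^{MR}+K_2^{MR}\Delta^{MR}+g_h^{MR}z^{MR}$. So $\psi_1^{MR}\ge0$ is invariant, and your comparison step is valid, only if $K_2^{MR}\Delta^{MR}\ge|g_h^{MR}z^{MR}|$ whenever $\psi_1^{MR}=0$.

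Nothing in the theorem enforces that, and the buffer lemma cannot supply it. $\Delta^{MR}$ in \eqref{eq:Delta_MR} is fixed by $a_{\max}$, $1/p_1$, $f_h^{MR}$, $g_h^{MR}$ and $\dot h^{MR}$. It is unrelated to $z^{MR}$ or to $\sqrt{\bar d^{MR}/w_2}$, and it vanishes whenever its bracket is negative. The paper does not prove strict safety either: it only argues $h^{MR}\to(\Delta^{MR}+\epsilon_1^{MR})/K_1^{MR}$ in steady state and calls this practical safety. So either claim that weaker conclusion, or add a margin to the barrier condition that dominates $|g_h^{MR}z^{MR}|$.

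Several further steps fail as written. You inherit them from the paper's proof, but they still need repair.

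\textbf{Backstepping cancellation.} The terms in \eqref{eqn:ctrl_cmd_mr} do not cancel the coupling you computed. With the law as printed, $\dot z^{MR}=(1-\alpha^{MR})\sigma_{AT}/v_A+\alpha^{MR}\psi_1^{MR}g_h^{MR}-K_a^{MR}z^{MR}$, so the cross terms reinforce rather than cancel. Already in Region I you get $\dot V^{MR}=-K_I\sigma_{AT}^2+2\sigma_{AT}z^{MR}/v_A-K_a^{MR}(z^{MR})^2$. This is negative definite only if $K_IK_a^{MR}>1/v_A^2$, a condition the theorem does not impose. You need the opposite signs, as in the $(\alpha-1)\sigma_{AT}/v_A$ term of \eqref{eqn:comd_acc}, and $\psi_1^{MR}-\Delta^{MR}$ in place of $\psi_1^{MR}$.

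\textbf{Actuator bound.} This does not follow from \Cref{rmk:inputsat}. That mechanism needs $a_A^c$ to stay bounded as $|a_A|\to a_{\max}$. But \eqref{eqn:ctrl_cmd_mr} divides by $1-(a_A/a_{\max})^n$, so $[1-(a_A/a_{\max})^n]{a_A^c}^{MR}$ equals the numerator regardless of $a_A$. The restoring term $-p_1a_A$ is then cancelled exactly, and the command itself blows up at the bound. You would have to show directly that $|{a_A^d}^{MR}+z^{MR}|<a_{\max}$ along trajectories, which again needs ${a_A^b}^{MR}$ bounded.

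\textbf{Bounded denominator.} Your suggested fix, that $|g_h^{MR}|$ is bounded below in the blended mode, has no support. With a single defender and the attacker heading straight at it, $g_h^{MR}=\sin\sigma_{Ai}=0$, precisely when avoidance is required.

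\textbf{Interception.} Concluding interception from the Region I analysis presumes the trajectory eventually remains in Region I, which neither you nor the paper establishes.
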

\begin{proof}
    Consider the Lyapunov function candidate associated with the target-interception and conservative-safety objectives,
\begin{align}
V_1^{MR}
=
\frac{1}{2}\left(1-\alpha^{MR}\right)\sigma_{AT}^2
+
\frac{1}{2}\alpha^{MR}\left(\psi_1^{MR}-\Delta^{MR}\right)^2 +\frac{1}{2}z^{MR}.
\end{align}
Differentiating $V_1^{MR}$ with respect to time and using \eqref{eqn:sat_model}, \eqref{eqn:ddot_h_MR}, \eqref{eqn:hocbf_cnd} and the relationship $a_A^{MR}=z^{MR}+\left(a_A^d\right)^{MR}$, we obtain the dynamics of the Lyapunov candidate $V_1^{MR}$ along the closed-loop trajectories as
\begin{align}
\dot{V}_1^{MR}
=&
(1-\alpha^{MR})\sigma_{AT}\dot{\sigma}_{AT}
+
\alpha^{MR}\left(\psi_1^{MR}-\Delta^{MR}\right)\left(\ddot{h}^{MR}+K_1^{MR}\dot{h}^{MR}\right)-\alpha^{MR}\left(\psi_1^{MR}-\Delta^{MR}\right)\dot{\Delta}^{MR}
\nonumber \\&+
\frac{\dot{\alpha}^{MR}}{2}\left(
\left(\psi_1^{MR}\right)^2-\sigma_{AT}^2\right)+ z^{MR}\dot{z}^{MR}\nonumber\\
=&
(1-\alpha^{MR})\sigma_{AT}\left(\frac{a_A^d}{v_A}+\frac{v_A}{r_{AT}}\sin\sigma_{AT}\right) 
+
\alpha^{MR}\left(\psi_1^{MR}-\Delta^{MR}\right)\left(f_h^{MR}+g_h^{MR}a_A^d+K_1^{MR}\dot{h}^{MR}\right)\nonumber\\&-\alpha^{MR}\left(\psi_1^{MR}-\Delta^{MR}\right)\dot{\Delta}^{MR} +
\frac{\dot{\alpha}^{MR}}{2}\left(
\left(\psi_1^{MR}\right)^2-\sigma_{AT}^2\right)
\nonumber\\&+\left[\left(1-\alpha^{MR}\right)\frac{\sigma_{AT}}{v_A}+\alpha^{MR}\psi_1^{MR}g_h^{MR}+\left[1-\left(\frac{a_A}{a_{\max}}\right)^n\right]\left(a_A^c\right)^{MR} - p_1 a_A^{MR}-\left(\dot{a}_A^d\right)^{MR}\right]z^{MR}.
\end{align}
Now, using the proposed desired lateral acceleration \eqref{eqn:des_ctrl_cons} and the proposed commanded lateral acceleration \eqref{eqn:ctrl_cmd_mr}, we can simplify the above equation to
\begin{align}
    \dot{V}_1^{MR}=& -\left(1-\alpha^{MR}\right)^2K_I\sigma_{AT}^2 -\left(\alpha^{MR}\right)^2K_2^{MR}\left(\psi_1^{MR}\right)^2-K_a^{MR}\left(z^{MR}\right)^2 -\alpha^{MR}\left(\psi_1^{MR}-\Delta^{MR}\right)\dot{\Delta}^{MR}\nonumber\\
    &+\alpha(1-\alpha)\Bigg[\sigma_{AT}\left(a_A^b+\frac{v_A}{r_{AT}}\sin\sigma_{AT}\right)+\psi_1^{MR}\left(f_h^{MR}+g_h^{MR}a_A^T\right) \Bigg]+
\frac{\dot{\alpha}^{MR}}{2}\left(
\left(\psi_1^{MR}\right)^2-\sigma_{AT}^2\right). \label{eqn:lyap_dot_mr}
\end{align}
We now analyze the dynamics of the Lyapunov function candidate in the three regions of the state space, according to the switching condition \eqref{eqn:switch_mr}. In \textbf{Region I (pure target interception mode)}, $\alpha^{MR}=0$ and $\psi^{MR}(a_A^T)\geq\epsilon$, which implies the target interception term $a_A^T$ is purely activated in $\left(a_A^d\right)^{MR}$. Under such a condition $\dot{\alpha}^{MR}=0$, and $\alpha^{MR}\left(1-\alpha^{MR}\right)=0$, which renders
    \begin{align}
        \dot{V}_1^{MR} = -K_I\sigma_{AT}^2 - K_a^{MR} \left(z^{MR}\right)^2 <0, \;\forall \;(\sigma_{AT}, z^{MR}) \in \mathbb{R}^2 \setminus(0,0),
    \end{align}
    if the controller gains are selected as $K_I>0$ and $K_a^{MR}>0$. This implies that both $\sigma_{AT}$ and $z^{MR}$ converge asymptotically to zero. This aligns the attacker's heading angle along the LOS to the target, yielding $\dot{r}_{AT}=-v_A<0$ and guaranteeing monotonic range. Also, the attacker will perfectly track the desired acceleration, that is, $a_A={a_A^d}^{MR}=a_A^T$, since $\lim_{t\to\infty}z^{MR}\to 0$. 
    
    In \textbf{Region II (pure maximum-engagement-range avoidance mode)}, $\psi^{MR}(a_A^T)\leq0$ and $\alpha^{MR}=1$, resulting in the safety component $\left(a_A^b\right)^{MR}$ to be fully active in the desired acceleration $\left(a_A^d\right)^{MR}$. Such a condition renders 
    \begin{align}
        \dot{V}_1^{MR} =& -K_2^{MR}\left(\psi_1^{MR}-\Delta^{MR}\right)^2 - K_a^{MR} \left(z^{MR}\right)^2-\alpha^{MR}\left(\psi_1^{MR}-\Delta^{MR}\right)\dot{\Delta}^{MR}\nonumber\\\leq&-\frac{K_2^{MR}}{2}\left(\psi_1^{MR}-\Delta^{MR}\right)^2-K_a^{MR}\left(z^{MR}\right)^2+\frac{L_{\Delta^{MR}}}{2K_2^{MR}}
    \end{align}
    since $\dot{\alpha}^{MR}=0$ and $\alpha^{MR}\left(1-\alpha^{MR}\right)=0$. It follows from the above expression that if $K_2^{MR}>0$ and $K_a>0$, then $\dot{V}_1^{MR}<0$ outside the compact set $$\Omega_{II}^{MR} \coloneqq \left\{ (\psi_1^{MR}-\Delta^{MR}, z^{MR})\in \mathbb{R}^2\middle|\dfrac{K_2^{MR}}{2}\left(\psi_1^{MR}-\Delta\right)^2+K_a^{MR}\left(z^{MR}\right)^2\leq\dfrac{\left(L_\Delta^{MR}\right)^2}{2K_2^{MR}}\right\}.$$ Therefore, $\psi_1^{MR}-\Delta$ and $z^{MR}$ are uniformly ultimately bounded in {Region II} with the ultimate performance bounds given as $$|\psi_1^{MR}-\Delta^{MR}|\leq\dfrac{L_\Delta^{MR}}{K_2^{MR}},~~|z^{MR}|\leq\sqrt{\dfrac{\left(L_\Delta^{MR}\right)^2}{2K_2^{MR}K_a^{MR}}}.$$ This results in $\psi_1^{MR}\to\Delta^{MR}+\epsilon_1^{MR}$ and $z\to \epsilon_2^{MR}$, where the residual errors $\epsilon_1$ and $\epsilon_2$ are bounded by the ultimate performance limits as derived above and can be made arbitrarily small. Since $\psi_1^{MR}=\dot{h}^{MR}+K_1^{MR}h^{MR}$, we have $\dot{h}^{MR}+K_1^{MR}h^{MR}\to\Delta^{MR}+\epsilon_1$. In steady state, when $\dot{h}^{MR}\to0$, $h^{MR}$ converges to a non-negative value, that is, $h^{MR}\to\dfrac{\Delta^{MR}+\epsilon_1^{MR}}{K_1^{MR}}\geq0$, ensuring that the safety is maintained asymptotically. Therefore, this implies that in {Region II}, the guidance law prioritizes safety and steers the attacker toward the boundary of the conservative safe set $\mathcal{C}_1^{MR}$. Additionally, $z^{MR}\to0$ asymptotically, resulting in the attacker being able to perfectly track the desired lateral acceleration, that is, $a_A\to a_A^d$.

    In \textbf{Region III (blended mode)}, $0<\psi^{MR}(a_A^T)<\epsilon^{MR}$, $0<\alpha^{MR}<1$, such that a convex combination of $a_A^b$ and $a_A^T$ is active in $a_A^d$, and  simplifies \eqref{eqn:lyap_dot_mr} to
    \begin{align}
       \dot{V}_1^{MR} \leq & -\left[(1-\alpha)^2K_I-\frac{1}{4}\left(\frac{v_A}{r_{AT}}+ \frac{\epsilon_\sigma^{MR}}{2}\right)-\frac{L_\alpha^{MR}}{2}\right]\sigma_{AT}^2 -\left[\alpha^2K_2^{MR} - \frac{\epsilon_h^{MR} +\epsilon_\Delta^{MR}}{8}+ \frac{L_\alpha^{MR}}{2}\right]\left(\psi_1^{MR}-\Delta^{MR}\right)^2 \nonumber\\&-K_a^{MR}\left(z^{MR}\right)^2 + \frac{\left|\left(a_A^b\right)^{MR}\right|^2}{8\epsilon_\sigma} + \frac{\left|\phi^{MR}\right|^2}{8\epsilon_h^{MR}} + \frac{L_\Delta^{MR}}{2\epsilon_\Delta^{MR}},
       \end{align}
      where $\phi^{MR}=f_h^{MR}+g_h^{MR}a_A^T$. The above expression provides the sufficient condition as presented in \eqref{eqn:suff_cond_MR} on the gains (after substituting $r_{AT}\geq R_i+c_i$, since in Region III, interception cannot occur and therefore $r_{AT}$ is bounded away from zero) to ensure that the first two terms in the above expression are always negative definite, yielding
    \begin{align}
        \dot{V}_1^{MR} = & -w_2\sigma_{AT}^2-w_2\left(\psi_1^{MR}-\Delta^{MR}\right)^2 -K_a^{MR}\left(z^{MR}\right)^2 + \overline{d}^{MR}
    \end{align}
    where $$\overline{d}^{MR}= \frac{\left|\left(a_A^b\right)^{MR}\right|^2}{8\epsilon_\sigma} + \frac{|\phi|^2 }{8\epsilon_h^{MR}}+ \frac{L_\Delta^{MR}}{2\epsilon_\Delta^{MR}}$$ denotes the upper bound on the residual terms. It follows from the above expression that the $\dot{V}_1^{MR}$ decreases outside the compact set $$\Omega_{III}^{MR}\coloneqq\Big\{(\sigma_{AT}, \psi_1^{MR},z^{MR})\in \mathbb{R}^3\big|w_2\sigma_{AT}^2 +w_2\left(\psi_1^{MR}-\Delta^{MR}\right)^2 + K_a^{MR}\left(z^{MR}\right)^2\leq\overline{d}^{MR}\Big\}$$ with ultimate performance bounds, $|\sigma_{AT}|, \left|\psi_1^{MR}-\Delta^{MR}\right|\leq \sqrt{\dfrac{\overline{d}^{MR}}{w_2}}$ and $\left|z^{MR}\right|\leq \sqrt{\dfrac{\overline{d}^{MR}}{K_a^{MR}}}$. This implies that in {Region III}, $\sigma_{AT}\to \epsilon_3^{MR}$, $\psi_1^{MR}\to \Delta^{MR}+ \epsilon_4^{MR}$ and $z\to\epsilon_5^{MR}$, where $\epsilon_3^{MR}, \epsilon_4^{MR}$ and $\epsilon_5^{MR}$ are positive constants. Consequently, the attacker maintains a bounded LOS misalignment, $\sigma_{AT} \not\to 0$, within a bounded neighborhood of the tightened safety boundary, while its lateral acceleration tracks the desired acceleration up to a bounded residual error, $a_A\to {a_A^d}^{MR} +\epsilon_5$. Therefore, the attacker remains outside the maximum engagement range of all the defenders, preserving safety at all times. 
    
    When the attacker's state lies in Region I, it is away from the defender's threat, and the attacker prioritizes target interception. While in Regions II and III, the attacker prioritizes safety.  In addition, the smooth saturation dynamics in \eqref{eqn:sat_model} ensures that the actual control input $a_A$ remains within the admissible limits for all time. Therefore, the proposed guidance law guarantees target interception, practical safety from the defenders, while respecting the constraints on the attacker's control inputs. This completes the proof. 
\end{proof}
\begin{remark}
    It is essential to note that in comparison to EZ-based design, the maximum engagement range-based design relies on the knowledge of the defender's control actions (see \eqref{eqn:ab_mr} and \eqref{eqn:cons_cond}). This dependency introduces a significant information requirement and limits the applicability of range-based safety formulations in scenarios where defender strategies are unknown, uncertain, or dynamically changing.
\end{remark}
Since the conservative safety function has a relative degree of two with respect to the attacker’s lateral acceleration, the resulting design relies on a higher-order control barrier function formulation, which requires additional modeling information and a more complex controller structure. In contrast, the earlier proposed EZ-based approach leads to a simpler first-order control barrier function design. While both approaches guarantee the safety of the attacker under bounded control inputs, they differ in their operational behavior. In the maximum engagement-range-based formulation, the attacker maintains a conservative distance and never enters the defenders’ engagement range. On the other hand, the EZ-based strategy allows the attacker to penetrate the defenders’ engagement range intelligently while ensuring the attacker cannot be neutralized by remaining outside the EZ. Thus, the maximum engagement-range-based formulation provides a conservative safety baseline, whereas the EZ-based design enables more aggressive yet provably safe interception strategies.

\section{Simulation results}
We now present simulation results demonstrating the efficacy of the proposed EZ-based and maximum engagement-range-based methods across varying numbers of defenders and defense strategies. For all simulation results, the attacker moves at a constant speed $v_A = 1\,\mathrm{m/s}$, and the controller gains for the EZ-based design are selected as $K_I = 0.7$ for target interception, $K_2 = 0.9$ for EZ avoidance, and $ K_a = 0.9$. The controller gains for the maximum engagement-range-based design are chosen as $K_1^{MR}=1.8$, $K_2^{MR}=1.5$, and $K_a^{MR}=15$.  Moreover, the safety aggregation parameters are selected as $\beta=0.5$ and $\beta^{MR}=0.9$ in the two cases. Each defender moves at a speed ratio of $\mu_i = 0.7$ relative to the attacker, has engagement range $R_i = 1.5\,\mathrm{m}$, and a capture radius $r_i = 0.5\,\text{m}$ resulting in maximum engagement radius of $2\mathrm{m}$. In the trajectory plots that follow, star-shaped markers denote the initial positions of the agents, while hollow circular markers represent their locations at intermediary times. Additionally, in the following figures, $\left(\cdot\right)^{EZ}$ represents variables associated with EZ-based design, while $\left(\cdot\right)^{MR}$ represents those for maximum engagement-range-based design. To better visualize the agents' trajectories, we also provide animations for all the following results at \tr{\href{https://youtu.be/5DsaCRfdgmc}{https://youtu.be/5DsaCRfdgmc}}.
\begin{figure}[h!]
	\centering
	\begin{subfigure}[t]{.49\linewidth}
    \centering
    \includegraphics[width=\linewidth]{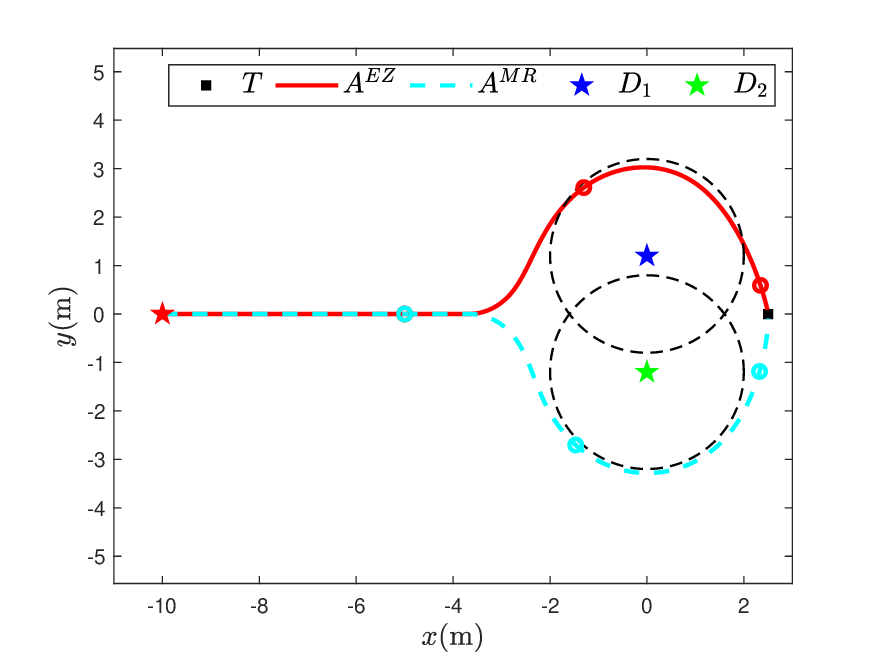}
    \caption{Attacker's trajectories.}
    \label{fig:def_2_traj}
    \end{subfigure}
	\begin{subfigure}[t]{0.49\linewidth}
		\centering
		\includegraphics[width=\textwidth]{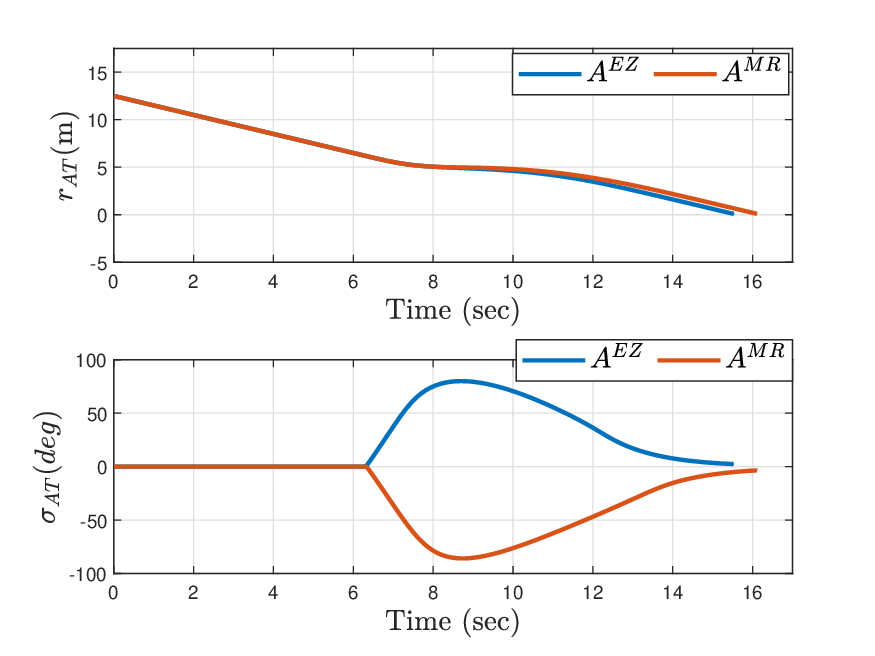}
		\caption{Attacker-to-target distance and bearing angle.}
		\label{fig:rel_ini_1}
	\end{subfigure}	
    \begin{subfigure}[t]{0.49\linewidth}
		\centering
		\includegraphics[width=\textwidth]{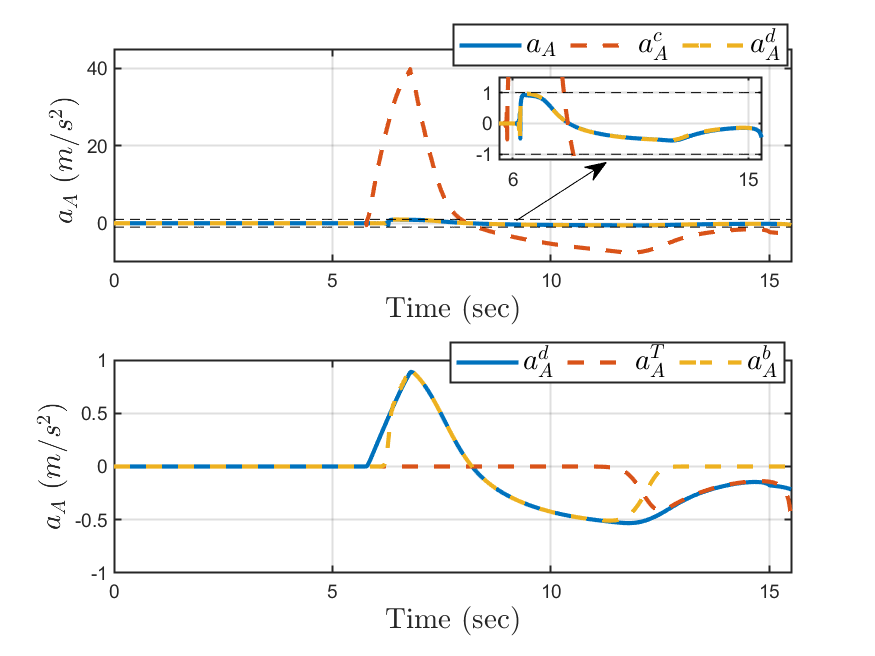}
		\caption{Control inputs (EZ formulation).}
		\label{fig:ctrl_saf_ini_1}
	\end{subfigure}
    \begin{subfigure}[t]{0.49\linewidth}
		\centering
		\includegraphics[width=\textwidth]{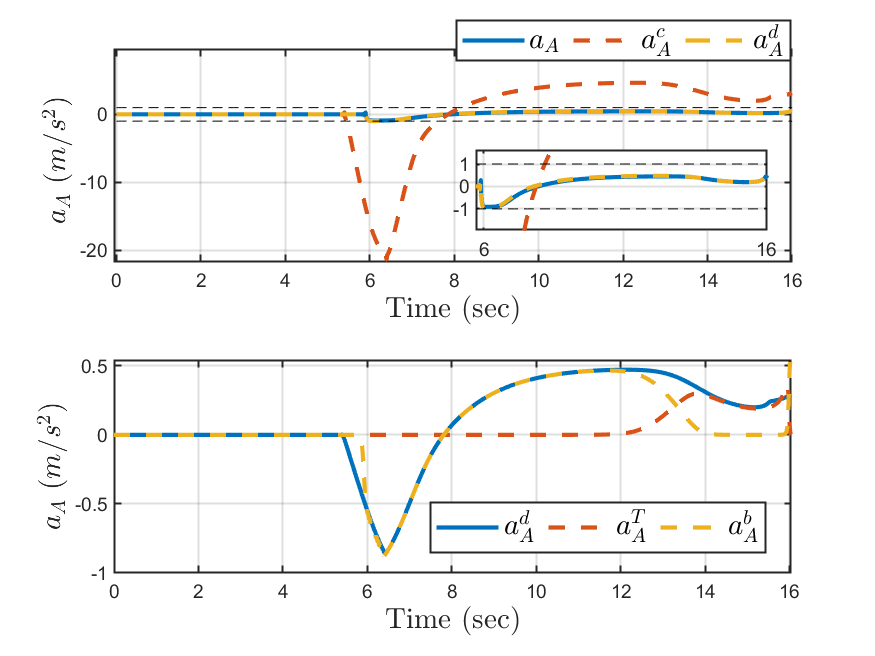}
		\caption{Control inputs (maximum range formulation).}
		\label{fig:ctrl_cons_ini_1}
	\end{subfigure}
    \begin{subfigure}[t]{0.49\linewidth}
		\centering
		\includegraphics[width=\textwidth]{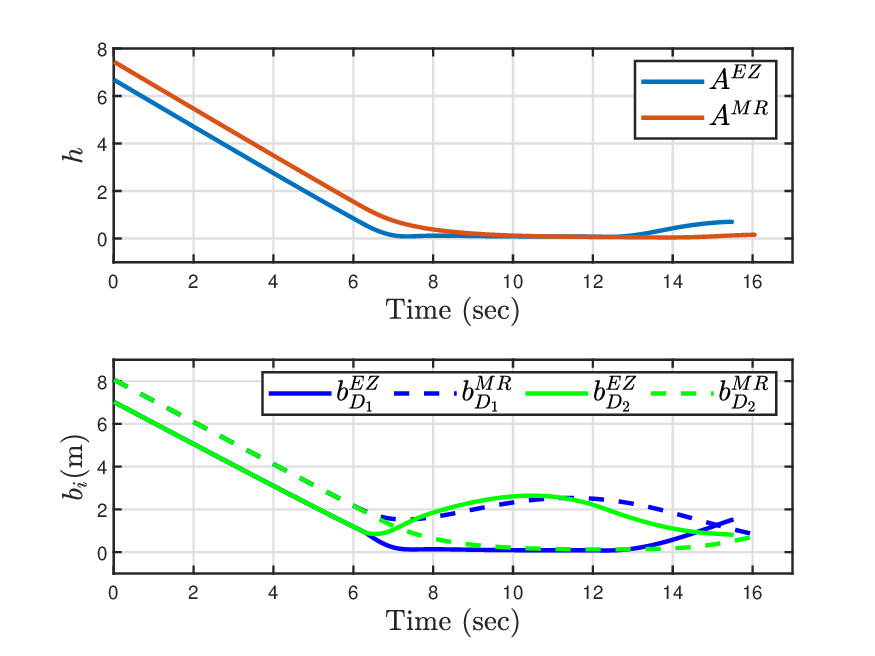}
		\caption{Evolution of safety functions.}
		\label{fig:saf_1}
	\end{subfigure}
    \begin{subfigure}[t]{0.49\linewidth}
		\centering
		\includegraphics[width=\textwidth]{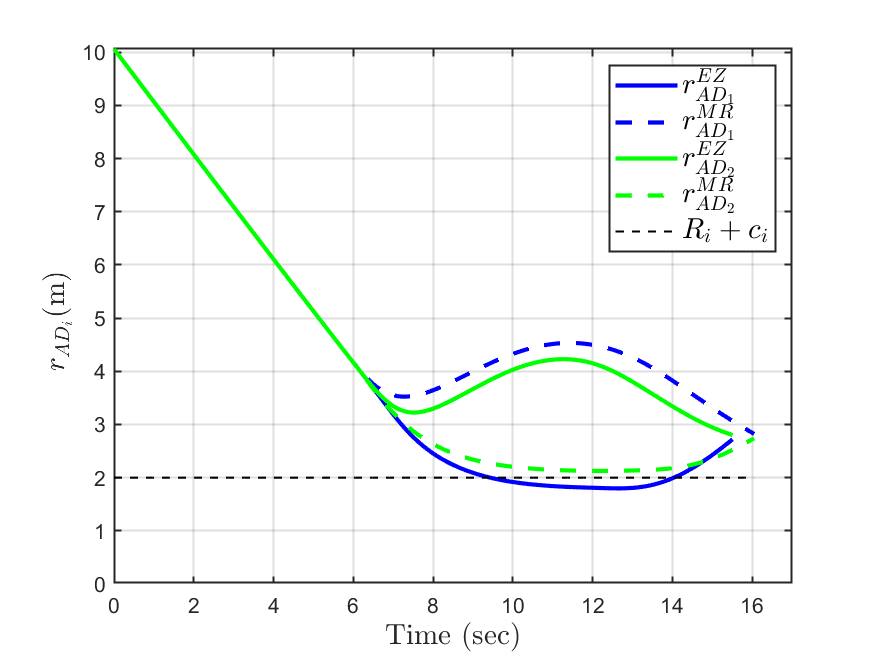}
		\caption{EZ penetration.}
		\label{fig:pen_eng_1}
	\end{subfigure}
	\caption{Target interception by avoiding two stationary defenders.}
	\label{fig:2def}
\end{figure}
\begin{figure}[h!]
	\centering
	\begin{subfigure}[t]{.49\linewidth}
    \centering
    \includegraphics[width=\linewidth]{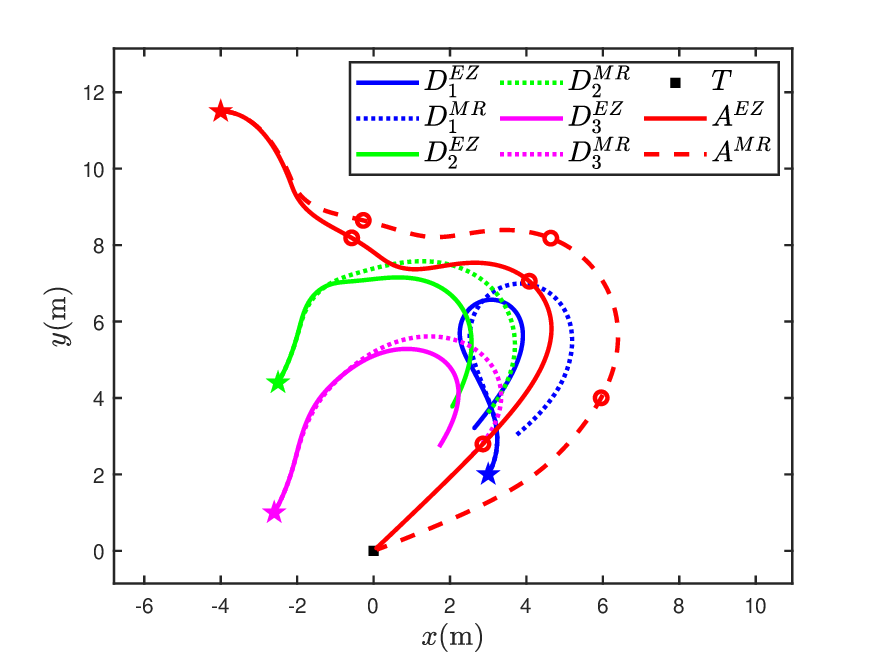}
    \caption{Attacker's trajectories.}
    \label{fig:safe_traj_ini_2}
    \end{subfigure}
    \begin{subfigure}[t]{0.49\linewidth}
		\centering
		\includegraphics[width=\textwidth]{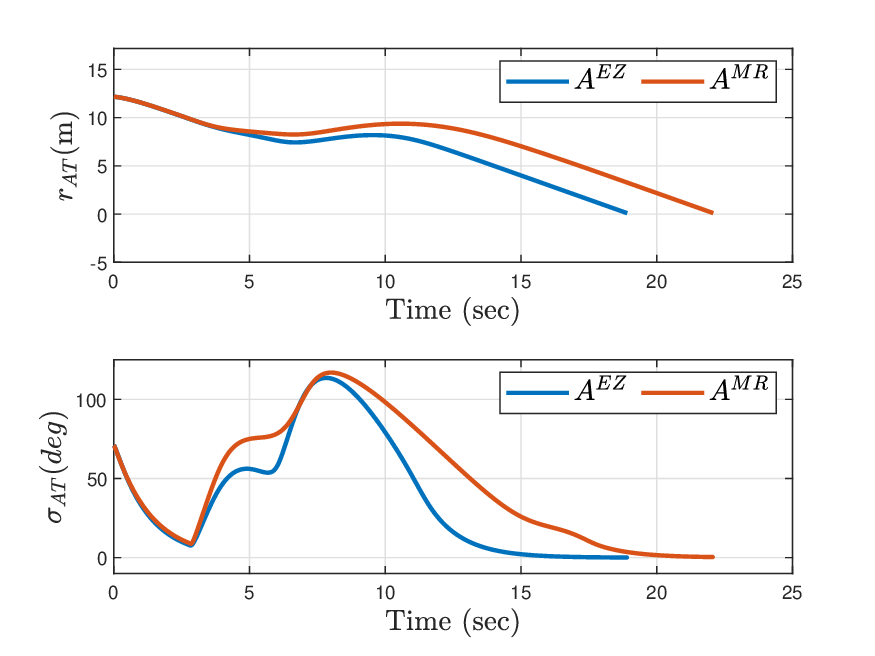}
		\caption{Attacker-to-target distance and bearing angle.}
		\label{fig:rel_ini_2}
	\end{subfigure}	
    \begin{subfigure}[t]{0.49\linewidth}
		\centering
		\includegraphics[width=\textwidth]{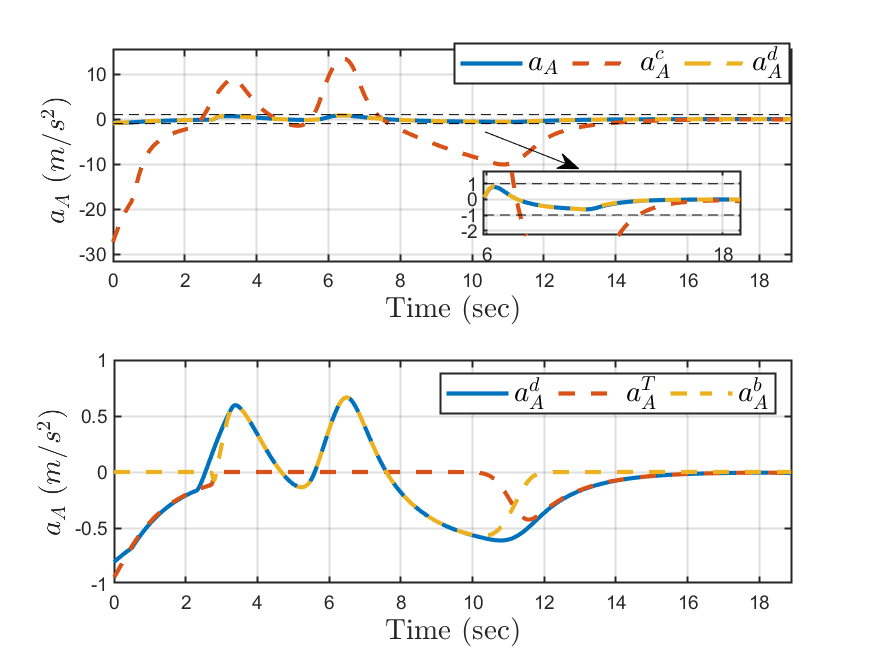}
		\caption{Control inputs (EZ formulation).}
		\label{fig:ctrl_saf_ini_2}
	\end{subfigure}
    \begin{subfigure}[t]{0.49\linewidth}
		\centering
		\includegraphics[width=\textwidth]{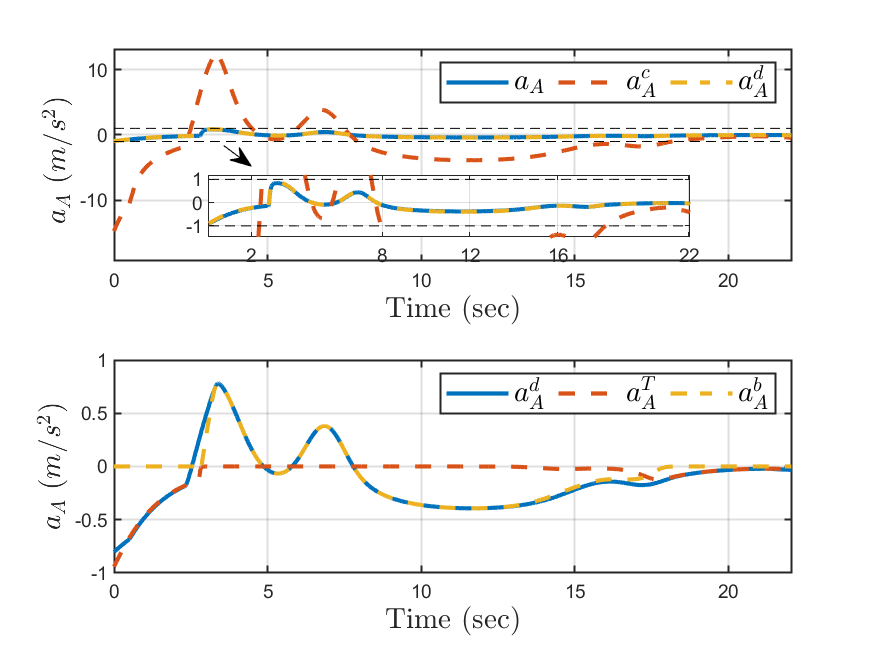}
		\caption{Control inputs (maximum range formulation).}
		\label{fig:ctrlcons_ini_2}
	\end{subfigure}
    \begin{subfigure}[t]{0.49\linewidth}
		\centering
		\includegraphics[width=\textwidth]{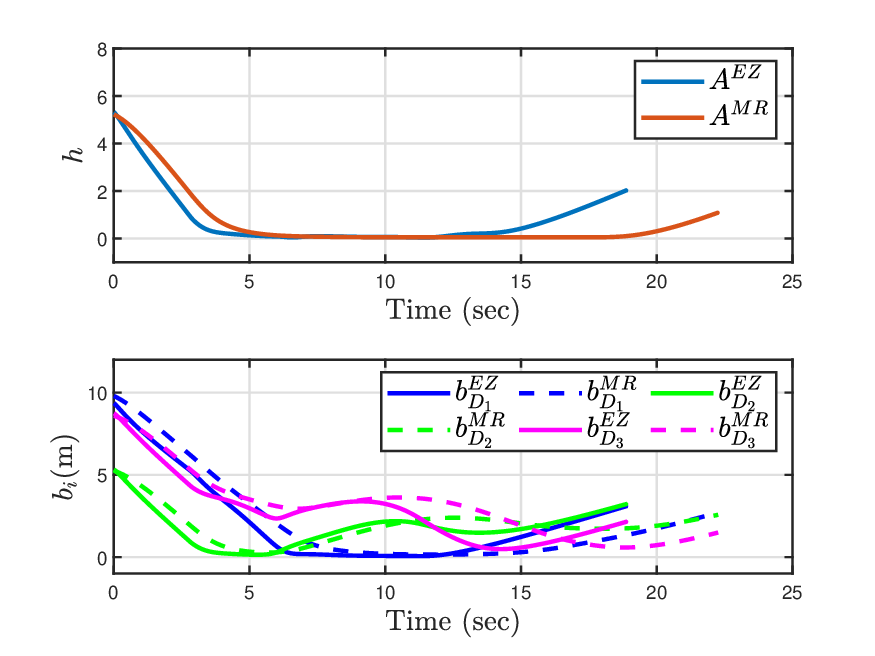}
		\caption{Evolution of safety functions.}
		\label{fig:h_saf_ini_2}
	\end{subfigure}
    \begin{subfigure}[t]{0.49\linewidth}
		\centering
		\includegraphics[width=\textwidth]{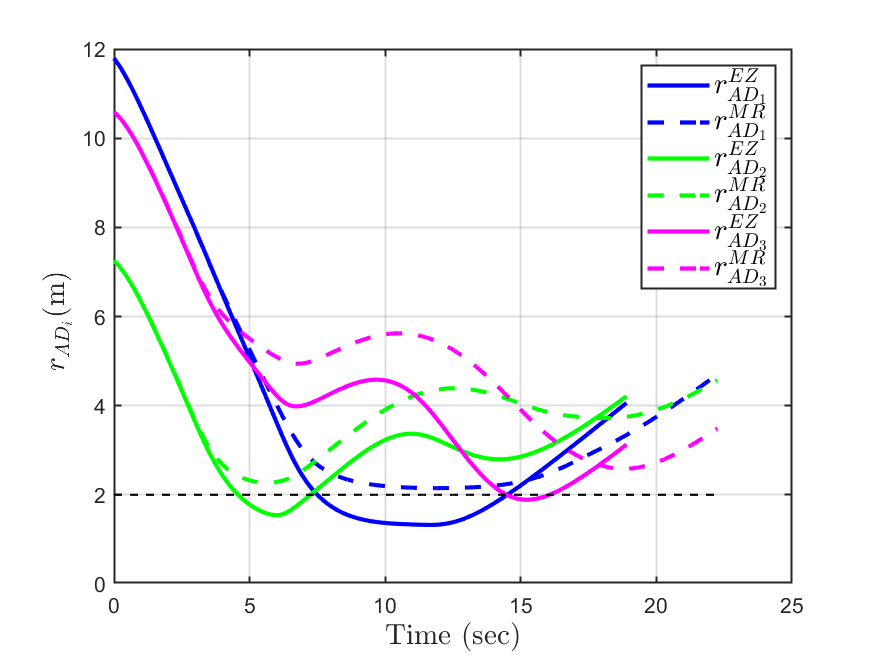}
		\caption{EZ penetration.}
		\label{fig:pen_eng_2_safe}
	\end{subfigure}
	\caption{Comparison of attacker performance with three moving defenders.}
	\label{fig:3def_saf}
\end{figure}
\begin{figure}[h!]
	\centering
	\begin{subfigure}[t]{.49\linewidth}
    \centering
    \includegraphics[width=\linewidth]{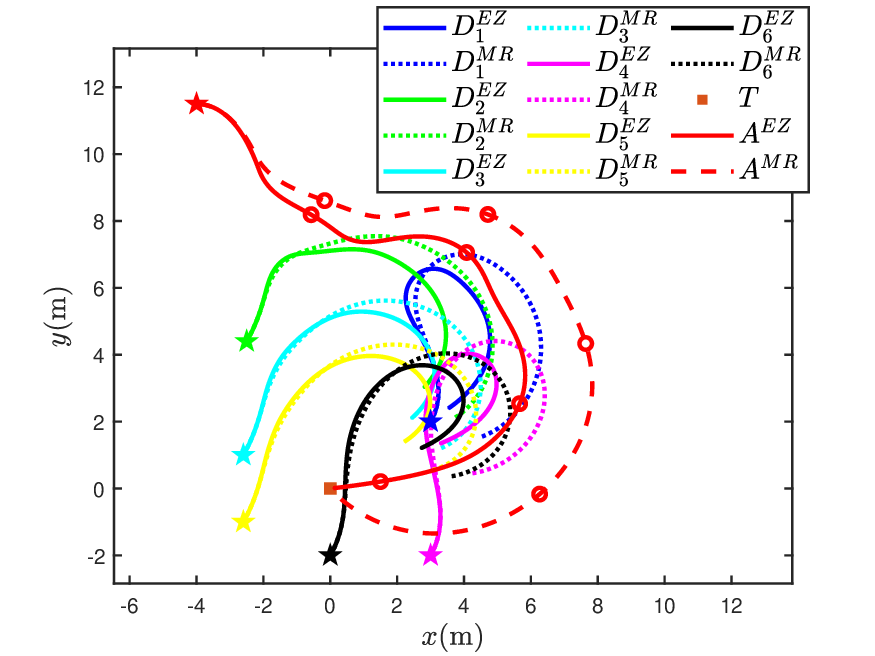}
    \caption{Attacker's trajectories.}
    \label{fig:safe_traj_ini_3}
    \end{subfigure}
    \begin{subfigure}[t]{0.49\linewidth}
		\centering
		\includegraphics[width=\textwidth]{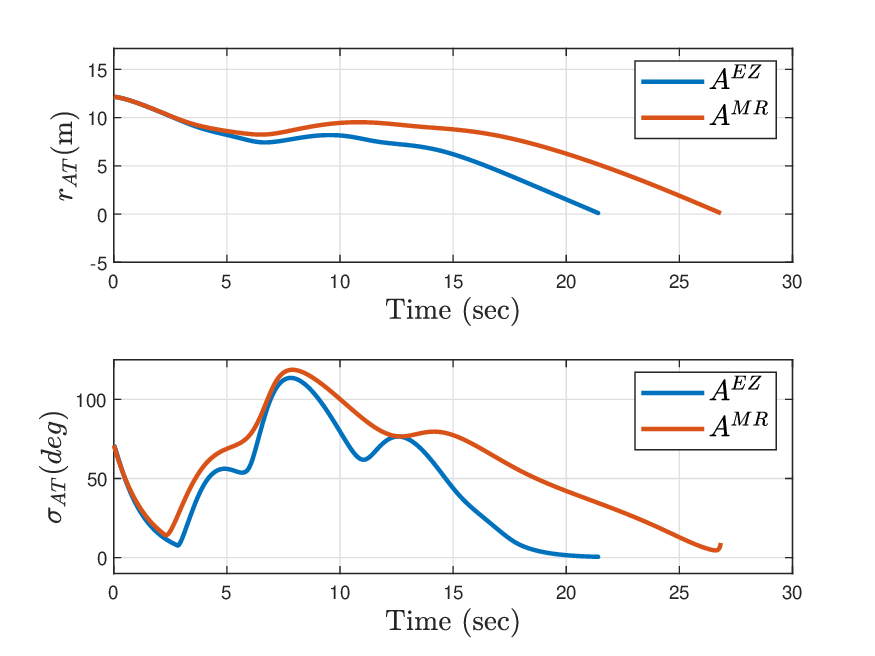}
		\caption{Attacker-to-target distance and bearing angle.}
		\label{fig:rel_ini_3}
	\end{subfigure}	
    \begin{subfigure}[t]{0.49\linewidth}
		\centering
		\includegraphics[width=\textwidth]{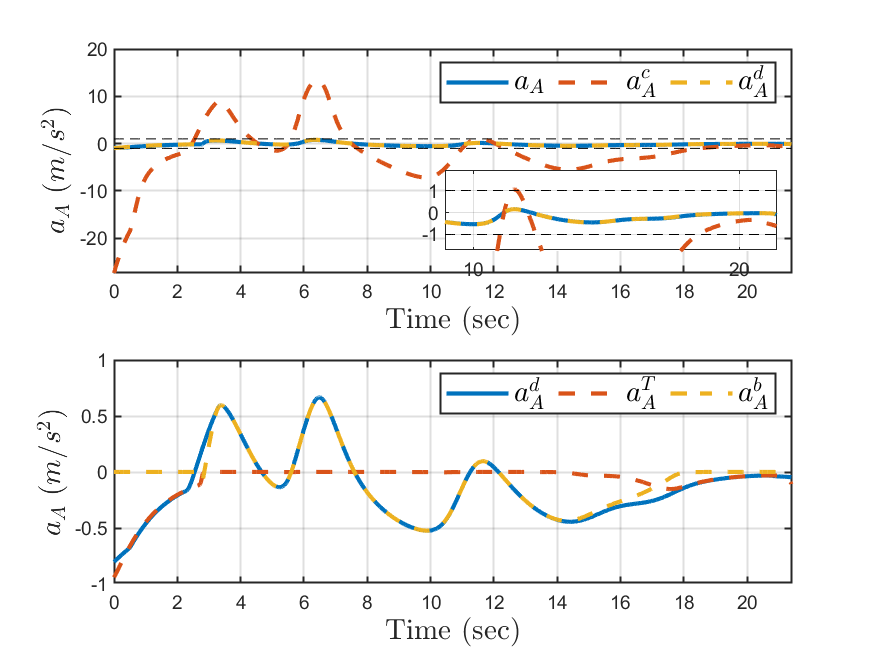}
		\caption{Control inputs (EZ formulation).}
		\label{fig:ctrl_saf_ini_3}
	\end{subfigure}
    \begin{subfigure}[t]{0.49\linewidth}
		\centering
		\includegraphics[width=\textwidth]{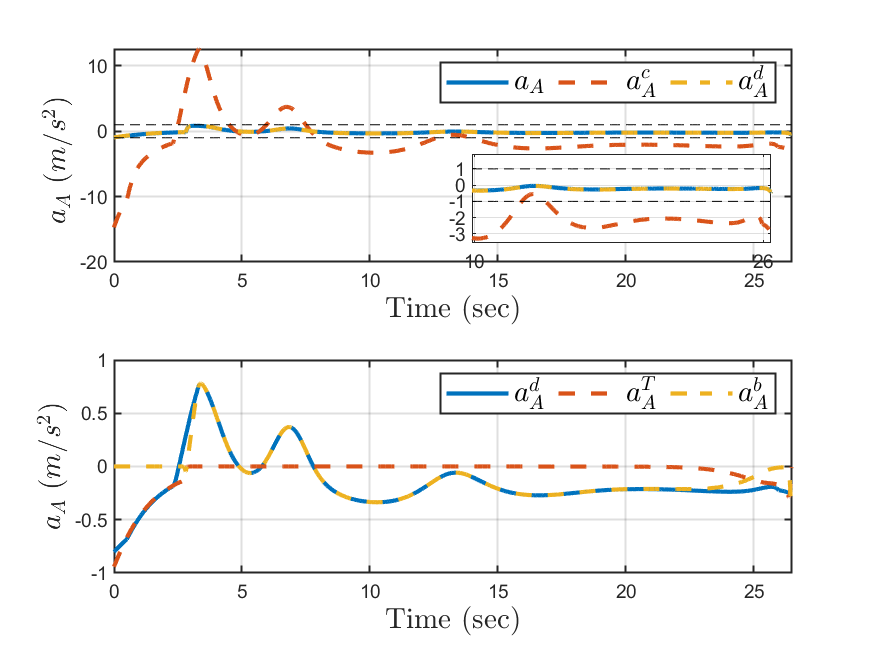}
		\caption{Control inputs (maximum range formulation).}
		\label{fig:ctrlcons_ini_3}
	\end{subfigure}
    \begin{subfigure}[t]{0.49\linewidth}
		\centering
		\includegraphics[width=\textwidth]{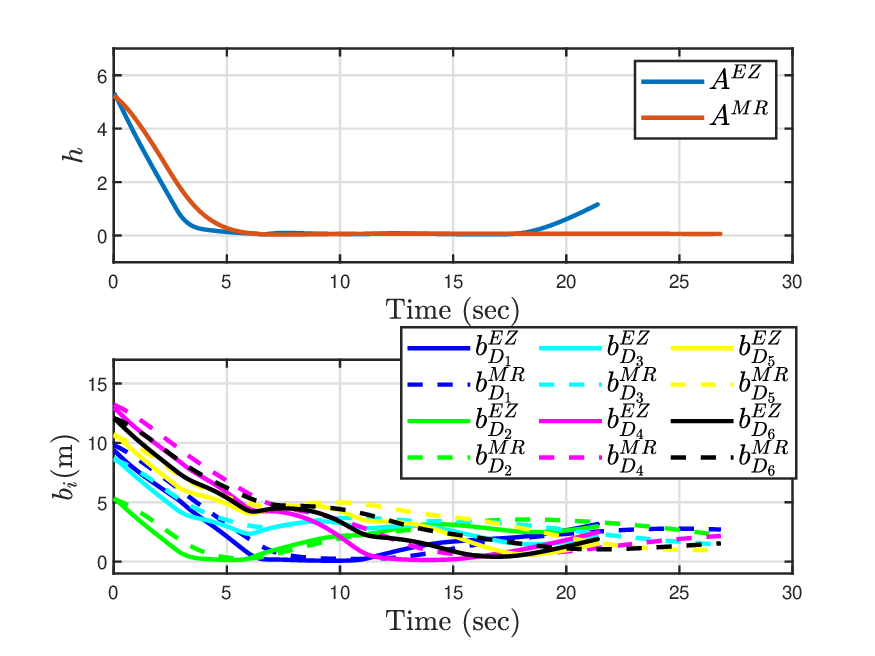}
		\caption{Evolution of safety functions.}
		\label{fig:h_saf_ini_3}
	\end{subfigure}
    \begin{subfigure}[t]{0.49\linewidth}
		\centering
		\includegraphics[width=\textwidth]{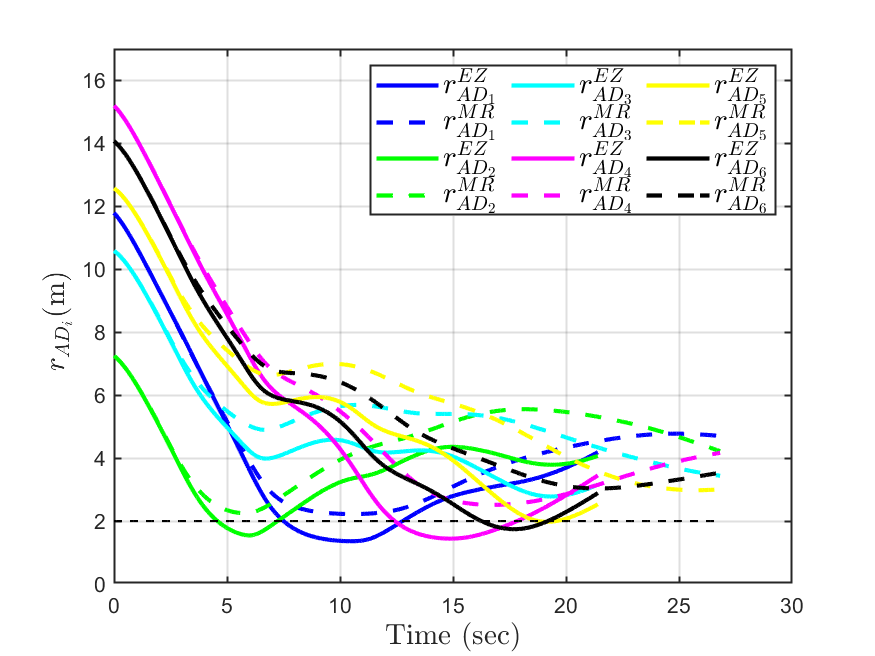}
		\caption{EZ penetration.}
		\label{fig:pen_eng_3_safe}
	\end{subfigure}
	\caption{Comparison of attacker performance with six moving defenders.}
	\label{fig:6def_saf}
\end{figure}
\begin{figure}[h!]
	\centering
	 \begin{subfigure}[t]{0.327\linewidth}
		\centering		\includegraphics[width=\textwidth]{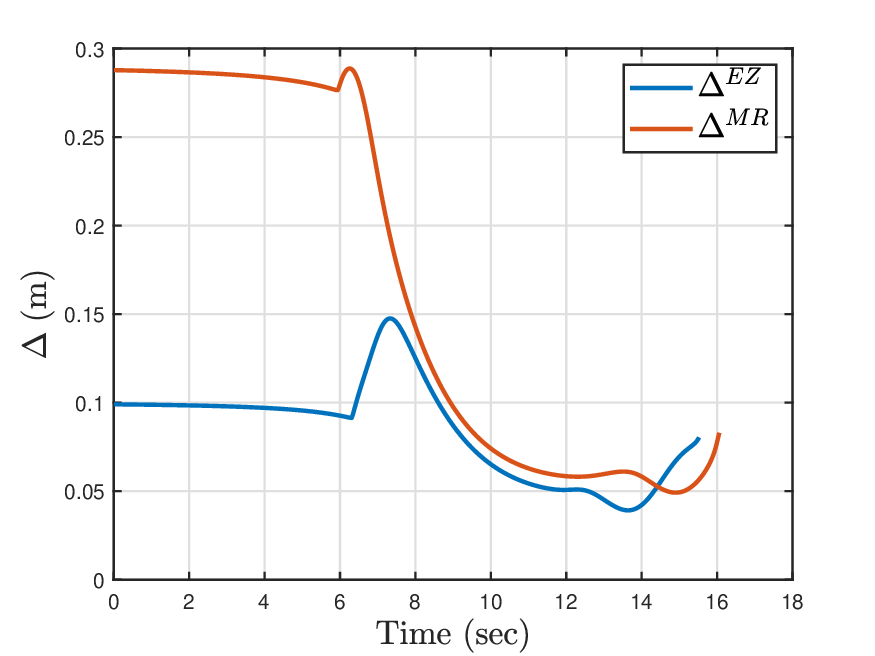}
		\caption{Two stationary defenders.}
		\label{fig:del_ini_1}
	\end{subfigure} \begin{subfigure}[t]{0.327\linewidth}
		\centering
		\includegraphics[width=\textwidth]{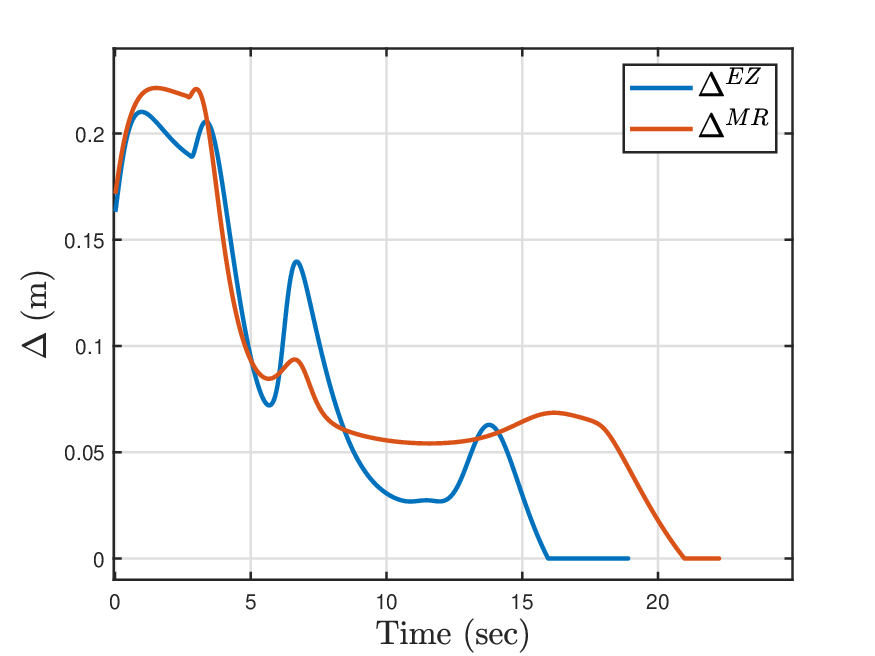}
		\caption{Three moving defenders.}
		\label{fig:del_ini_2}
	\end{subfigure}
    \begin{subfigure}[t]{0.327\linewidth}
		\centering
		\includegraphics[width=\textwidth]{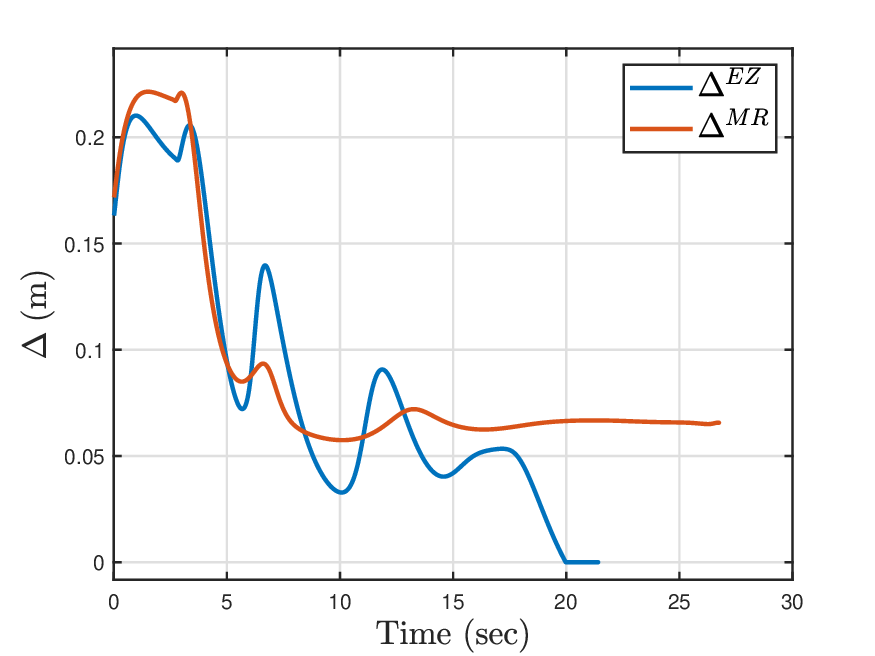}
		\caption{Six moving defenders.}
		\label{fig:del_ini_3}
	\end{subfigure}
	\caption{Comparison of the tightening parameter variation.}
	\label{fig:delcompare}
\end{figure}

\subsection{Target Interception avoiding Stationary Defenders}
In the first set of results (see \Cref{fig:2def}), we consider two stationary defenders, located at $([x_{D_1}, y_{D_1}]^\top = [0, 0.6]\mathrm{m}$, $[x_{D_2}, y_{D_2}]^\top = [0,-0.6]\mathrm{m}$. The attacker starts at  $([x_{A}, y_{A}]^\top = [0, -10]\mathrm{m}$ with a heading angle of $\gamma_A=0^\circ$, such that the initial LOS from attacker to the target intersects with the overlapping region between the engagement range of the defenders. The trajectory for both methods is compared in \Cref{fig:def_2_traj}, where the dotted circles centered at the respective defender's position represent the maximum engagement radius of the defender. Both approaches achieve successful target interception. However, the EZ-based approach enables the attacker to enter the defenders' maximum range, potentially allowing it to take a shorter path. In comparison, maximum engagement-range-based design forces the attacker to remain conservatively outside the defender's maximum effective range, resulting in a longer path to intercept the target. This observation is verified by comparing the target relative variables, as shown in \Cref{fig:rel_ini_1}. EZ-based design takes less time for the relative attacker-target range to converge to zero than the maximum engagement-range formulation. In general, the relative range to the target decreases over time and eventually converges to zero, indicating successful interception. The lead angle of the attacker relative to the target, denoted by $\sigma_{AT}$, also shows convergent behavior, that is, approaching zero nominally but temporarily deviating when the attacker executes avoidance maneuvers near the EZ. \Cref{fig:ctrl_saf_ini_1,fig:ctrl_cons_ini_1} presents the attacker’s control input profiles for both methods, illustrating the lateral acceleration converging to the desired lateral acceleration without violating the prescribed bounds. 
The third and fourth subplots in \Cref{fig:ctrl_saf_ini_1,fig:ctrl_cons_ini_1} compare the components of the desired lateral acceleration, demonstrating switching between target-seeking and safety-preserving modes and showing smooth transitions at different time instants depending on the perceived threat from nearby defenders.

\Cref{fig:saf_1} depicts the profiles of the safety constraint variables for both methods, where the aggregate safety function and the safety function remain strictly positive throughout the engagement, guaranteeing safety for the attacker at all times. However, it is worth mentioning that these variables have different physical interpretations across the two approaches. In the maximum engagement-range-based design, safety corresponds to strictly keeping the attacker outside each defender’s maximum engagement range, whereas EZ-based safety is defined with respect to an EZ, which captures the coupled geometric and dynamic interaction between the attacker and the defenders. Finally, we compare the relative distances between the attacker and the defenders in \Cref{fig:pen_eng_1}, where the dotted black line represents the defender's maximum range. The EZ-based design profile is shown in a solid line, while the maximum engagement-range-based design is shown in dotted lines. One can observe that the attacker's distance falls below the black dotted line at times when the attacker is within a defender's engagement range. In the maximum engagement range-based design, the relative attacker-defender distance never falls below the defender engagement range, clearly demonstrating the conservatism of such an approach.

\subsection{Target Interception avoiding Moving Defenders}
We now present results for moving defenders ($n=3$ and $n=6$), where the defenders adopt an aggressive stance and utilize pure pursuit to intercept the attacker by executing the lateral acceleration
\begin{align}
    a_{D_i} = -K_I v_i\left(\gamma_i-\theta_{iA}\right)+ \dfrac{v_iv_A \sin\left(\gamma_A-\theta_{iA}\right)}{r_{iA}} - \dfrac{v_i^2 \sin\left(\gamma_i-\theta_{iA}\right)}
{r_{iA}},\; \forall\;i\in \mathcal{D}, \label{eqn:def_ctrl}
\end{align}
where $\theta_{iA}$ denotes the LOS angle from the $i$\textsuperscript{th} defender to the attacker. In the results presented in \Cref{fig:3def_saf}, three defenders are initially positioned at $([x_{D_1}, y_{D_1}]^\top = [2.5,\, 2.5]\mathrm{m}$ , $[x_{D_2}, y_{D_2}]^\top = [-2,\, 4.4]\mathrm{m}$ , and $[x_{D_3}, y_{D_3}]^\top = [-2.9,\, 2.2]\mathrm{m}$, respectively. In the results presented in \Cref{fig:6def_saf}, first three defenders start as the same position as in the $n=3$ case, while the remaining defenders start at, $([x_{D_4}, y_{D_4}]^\top = [3,\, -2]\mathrm{m}$ , $[x_{D_5}, y_{D_5}]^\top = [-2.6,\, -1]\mathrm{m}$ , and $[x_{D_6}, y_{D_6}]^\top = [0,\, -2]\mathrm{m}$, respectively. All defenders are assumed to be homogeneous in their capabilities, similar to the stationary defender case. 

\Cref{fig:safe_traj_ini_2,fig:safe_traj_ini_3} illustrate the trajectory for the defenders and the attacker for both approaches, where the defenders can be seen to be moving closer to intercept the target following the control law presented in \eqref{eqn:def_ctrl}. One can observe that, following the EZ-based approach, the attacker takes a shorter path to intercept the target by penetrating the defender's engagement range, while the maximum engagement-range-based approach takes a longer route to comply with the strict requirement of fully avoiding a defender's range. The relative variables are compared in \Cref{fig:rel_ini_2,fig:rel_ini_3}, which shows the relative range and the bearing angles eventually converging to zero, with the convergence time difference between the methods increasing in the moving case compared to the stationary defenders scenario. However, between $5-12 \mathrm{sec}$, the relative variable profiles can be seen temporarily moving away from zero, corresponding to times when the attacker executes an evasive maneuver to avoid an EZ or a maximum engagement range. 

The safety constraints profiles are depicted in \Cref{fig:h_saf_ini_2,fig:h_saf_ini_3}. For both methods and both scenarios, the safety functions remain strictly positive, ensuring the attacker's safety against the defender's threat. The subplots \Cref{fig:ctrl_saf_ini_2,fig:ctrlcons_ini_2,fig:ctrl_saf_ini_3,fig:ctrlcons_ini_3} compare the different components of the attacker's lateral acceleration in the moving defender scenario. These profiles illustrate switching between the target-seeking and safety-preserving modes, with the control input dynamically adapting to the defenders' time-varying positions. Smooth transitions between these modes can be observed at different time instants as the attacker adjusts its maneuvering strategy in response to the evolving threat posed by the moving defenders. The comparison of the attacker-defender distance is presented in \Cref{fig:pen_eng_2_safe,fig:pen_eng_3_safe}, which demonstrates that the engagement-range-based method may temporarily lead the attacker into the defenders' engagement range without being unsafe, while the maximum engagement-range-based approach strictly prevents such an entry by forcing a conservative distance constraint.

Finally, we compare the tightening parameter profiles for both methods across all scenarios, as illustrated in \Cref{fig:delcompare}. It is observed that the tightening parameter assumes relatively larger values during the initial phase of the engagement and gradually decreases as the engagement progresses. This happens since during the initial phases, the attacker encounters the unsafe regions head-on and demands higher values of the tightening parameter. As the engagement proceeds, the attacker performs evasive maneuvers and progressively avoids the defenders’ threatening regions. Consequently, the attacker eventually aligns with the target along a relatively unobstructed LOS. Under such conditions, the safety constraint becomes less restrictive, and the required tightening margin decreases accordingly. As a result, the tightening parameter gradually converges toward zero at the end of the engagement, indicating that no additional safety margin is required to maintain safe operation. Furthermore, the tightening parameter remains strictly positive and evolves smoothly throughout the entire engagement for all scenarios, which ensures continuous enforcement of the safety constraints without introducing discontinuous variations in the control behavior.

\begin{table}[h!]
\centering
\caption{Comparison of target interception time.}
\label{tab:time_compare}
\begin{tabular}{|l|c|c|c|}
\hline
\textbf{Simulation Scenario} & \textbf{Proposed Method } & \textbf{Conservative safety constraint}&
\textbf{Noramlized Savings ($\%$)}\\
\hline
Static defenders ($n=2$) & 15.52 s & 15.97 s& 2.82\\ \hline
Moving defenders ($n=3$) & 18.91 s & 22.08 s&14.36 \\ \hline
Moving defenders ($n=6$) & 21.42 s & 26.82 s& 20.13 \\
\hline
\end{tabular}
\end{table}

\Cref{tab:time_compare} compares the interception times achieved by the EZ-based and maximum engagement-range-based formulations for defenders. The EZ-based approach yields a modest reduction in interception time for static defenders (2.82\%) and a substantially larger reduction for moving defenders (14.36\% and 20.13\%), compared to the maximum engagement-range formulation. The relatively smaller improvement in the stationary defender case arises because the defenders do not actively maneuver, leading to a largely static threat region in which both methods produce similar avoidance behavior. In contrast, when defenders are moving, the threat geometry evolves dynamically, which makes conservative distance-based constraints more restrictive. Under such conditions, the EZ-aware formulation more effectively captures the coupled kinematic and dynamic interaction between the attacker and defenders, allowing the attacker to exploit favorable engagement configurations while still maintaining safety guarantees. Consequently, the attacker can maneuver more efficiently around the defenders and maintain a more direct path toward the target, resulting in a shorter interception time. These results highlight the advantage of incorporating EZ-aware safety constraints, particularly in dynamic multi-defender scenarios where threat regions evolve over time.

\section{Conclusions}
In this work, we designed nonlinear guidance laws for an attacker to safely intercept a target while avoiding capture by the defenders and respecting the physical bounds on the attacker's lateral acceleration. In the first approach, the defender-induced regions where attacker interception by defenders is guaranteed were modeled as Engagement Zones (EZs) and directly incorporated into the guidance design. In the second approach, we design the safety constraints to ensure that the attacker remains outside the maximum engagement range of the defenders.
To ensure the designs respect the control input bounds, we also incorporate a smooth saturation model and introduce a tightening parameter to shrink the respective safe set and ensure permissible control input values. Further, a smooth minimum (log-sum-exp) function was adopted to aggregate risks across multiple zones into a unified safety measure. Stability analysis under a continuous switching function established safe-set invariance near EZs and asymptotic target interception when away from the EZs or maximum engagement range, all while respecting input bounds. Numerical simulations with multiple defenders validated the approach across diverse initial conditions, including challenging configurations with overlapping zones and concave notches. The results show that the proposed EZ-aware safety formulation enables less conservative maneuvering compared with traditional range-based safety constraints, which allows the attacker to exploit favorable engagement geometries while maintaining provable safety and achieving reduced interception times. 

\bibliography{sample}

\end{document}